\definecolor{darkgreen}{RGB}{50,190,50}
\definecolor{darkblue}{RGB}{0,0,190}
\definecolor{darkred}{RGB}{238,0,0}
\newcommand{\nl}{\ensuremath{\hspace*{-0.5pt}}}
\newcommand{\nr}{\ensuremath{\hspace*{0.5pt}}}
\newcommand{\sigA}{\ensuremath{\sigma^{\hspace{-0.0pt}\protect\raisebox{0pt}{\tiny{$A$}}}}}
\newcommand{\sigB}{\ensuremath{\sigma^{\hspace{-0.0pt}\protect\raisebox{0pt}{\tiny{$B$}}}}}
\newcommand{\sigi}{\ensuremath{\sigma^{\hspace{-0.0pt}\protect\raisebox{0pt}{\tiny{$i$}}}}}
\newcommand{\vecsigA}{\ensuremath{\vec{\sigma}^{\hspace{0pt}\protect\raisebox{0pt}{\tiny{$A$}}}}}
\newcommand{\vecsigB}{\ensuremath{\vec{\sigma}^{\hspace{0pt}\protect\raisebox{0pt}{\tiny{$B$}}}}}
\newcommand{\expval}[1]{\ensuremath{\left\langle\right. #1 \left.\right\rangle}}
\newcommand{\HSpr}[2]{\ensuremath{\protect{(\nr #1 \nr|\nr #2 \nr)\subtiny{0}{-1}{\rm{HS}}}}}
\newcommand{\pr}{^{\prime}}
\newcommand{\rhoA}[1]{\ensuremath{\rho_{\hspace*{-1.0pt}\protect\raisebox{-1.0pt}{\tiny{$ #1 $}}}}}
\newcommand{\subtiny}[3]{\ensuremath{_{\hspace{#1 pt}\protect\raisebox{#2 pt}{\tiny{$ #3$}}}}}
\newcommand{\suptiny}[3]{\ensuremath{^{\hspace{#1 pt}\protect\raisebox{#2 pt}{\tiny{$ #3$}}}}}
\newcommand{\subA}[1]{\ensuremath{_{\hspace*{-1.0pt}\protect\raisebox{-1.0pt}{\tiny{$ #1 $}}}}}
\newcommand{\supA}[1]{\ensuremath{^{\hspace*{-0.0pt}\protect\raisebox{-1.0pt}{\scriptsize{$ #1 $}}}}}
\DeclareMathOperator{\artanh}{artanh}
\DeclareMathOperator{\diag}{diag}
\newcommand{\tr}{\textnormal{Tr}}
\newcommand{\djj}{d\kern-0.4em\char"16\kern-0.1em}
\newcommand{\beq}{\begin{eqnarray}}
\newcommand{\eeq}{\end{eqnarray}}
\newcommand{\B}{\ensuremath{\protect{\mathcal{B}}}}
\newcommand{\Ha}{\mathcal{H}}
\newtheorem{theorem}{Theorem}[section]
\begin{document}

\title{Geometry of two-qubit states with negative conditional entropy}
\author{Nicolai Friis}
\email{nicolai.friis@uibk.ac.at}
\affiliation{
Institute for Theoretical Physics, University of Innsbruck,
Technikerstra{\ss}e 21a,
A-6020 Innsbruck,
Austria}
\author{Sridhar Bulusu}
%\email{}
\affiliation{Faculty of Physics, University of Vienna, Boltzmanngasse 5, 1090 Vienna, Austria}
\author{Reinhold A. Bertlmann}
\email{reinhold.bertlmann@univie.ac.at}
\affiliation{Faculty of Physics, University of Vienna, Boltzmanngasse 5, 1090 Vienna, Austria}

\begin{abstract}
We review the geometric features of negative conditional entropy and the properties of the conditional amplitude operator proposed by Cerf and Adami for two qubit states in comparison with entanglement and nonlocality of the states. We identify the region of negative conditional entropy in the tetrahedron of locally maximally mixed two-qubit states. Within this set of states, negative conditional entropy implies nonlocality and entanglement, but not vice versa, and we show that the Cerf-Adami conditional amplitude operator provides an entanglement witness equivalent to the Peres-Horodecki criterion. Outside of the tetrahedron this equivalence is generally not true.
\end{abstract}

\pacs{
03.67.-a, %Quantum Information
03.65.Ud, %Entanglement and quantum nonlocality (e.g. EPR paradox, Bell's inequalities, GHZ states, etc.) (for entanglement production and manipulation, see 03.67.Bg; for entanglement measures, witnesses etc., see 03.67.Mn; for entanglement in Bose-Einstein condensates, see 03.75.Gg)
03.67.Hk % 	Quantum communication
}

\maketitle

\section{Introduction}\label{sec:introduction}

The feature of entanglement is the basis for many fascinating phenomena in quantum information and quantum communication, such as quantum teleportation~\cite{BennettBrassardCrepeauJoszaPeresWootters1993,HorodeckiMPR1999} or quantum cryptography~\cite{BennettBrassard1984,Ekert1991,GisinRibordyTittelZbinden2002}. Although the division of quantum states into entangled and separable states is well-defined mathematically, checking whether a given state is entangled or not often proves to be extraordinarily difficult. Consequently, a plethora of inequivalent criteria and measures is available for the detection and classification of entanglement~\cite{PlenioVirmani2007,HorodeckiRPMK2007,GuehneToth2009}. The spectrum of available methods ranges from entanglement monotones such as the concurrence~\cite{Hill-Wootters1997,Wootters1998,Wootters2001} or negativity~\cite{Plenio2005}, and geometric entanglement detection criteria in the form of so-called entanglement witnesses~\cite{HorodeckiMPR1996, Terhal2000,BertlmannNarnhoferThirring2002,Bruss2002}, to measures that directly quantify the utility of a state for specific tasks requiring entanglement. One of the most prominent such tasks, which challenges our preconceptions of the reality of nature~\cite{Bell1981}, is the test of~a Bell inequality~\cite{Bell1964, Bell:Speakable1987}, distinguishing between so-called local and nonlocal states, for which the inequality is satisfied or violated, respectively. However, while entanglement is required to violate a Bell inequality, entanglement and nonlocality are not the same concepts. As discovered by Werner~\cite{Werner1989}, certain mixed states, albeit still being entangled, cannot be used to violate a Bell inequality and hence behave like strictly local states.

In contrast to methods that directly relate entanglement to physically measurable quantities stand information-theoretic approaches based on the entropies of quantum states. In both classical and quantum information theory entropies play a crucial role. Quite generally, entropy represents the degree of uncertainty \textemdash\ the lack of knowledge \textemdash\ about a (quantum) system. More specifically, the von Neumann entropy of a quantum state can be interpreted~\cite{Schumacher1995, JoszaSchumacher1994} as the minimal amount of information necessary to fully specify the state, be it separable or entangled. For the quantification of the correlations between two subsystems $A$ and $B$, two particularly interesting entropies are the mutual entropy (or mutual information) $S(A:\!B)$ and the conditional entropy $S(A|B)$. In analogy to the classical case, the mutual entropy $S(A:\!B)$ corresponds to the amount of information contained in the joint state that exceeds the information locally available to $A$ and $B$, i.e., $S(A:\!B)$ is a measure for the degree of correlation between subsystems $A$ and $B\,$. On the other hand, $S(A|B)$ is the entropy of the state of subsystem $A$ conditioned on the knowledge of the state of subsystem $B$. In a series of papers~\cite{Cerf-Adami-PRL1997, Cerf-Adami-PRA1999, Cerf-Adami-9605002v2, CerfAdami1998, CerfAdami1997} investigating the conditional entropy and the mutual entropy by means of so-called \emph{mutual} and \emph{conditional amplitude operators} (CAO), Cerf and Adami concluded that the quantum conditional entropy \textemdash\ in contrast to its classical counterpart \textemdash\ can become negative for entangled states. This provides a connection between quantum nonseparability and conditional entropy, or mutual entropy that we wish to investigate further in this article.

The purpose of our article is hence to review the geometry of quantum states with negative conditional entropy and to compare it with the different regions of nonlocality, entanglement and separability. In particular, we want to focus on the paradigmatic case of two qubits, which is one of the very few examples where the different methods for detection and quantification of entanglement and nonlocality described above are practically computable and can be compared both numerically and geometrically. In this sense, albeit being a system of comparatively small complexity, the two-qubit case is of high significance, since it serves as a guiding example for developing the geometric understanding and intuition necessary to study more complicated systems.

Our investigation confirms that for the interesting class of locally maximally mixed states, the requirement of negative conditional entropy is a strictly stronger constraint than that of nonlocality, i.e., all states with negative conditional entropy are nonlocal, and therefore entangled, but the converse statements do not hold. We then consider an entanglement criterion based on the Cerf-Adami conditional amplitude operator and show that it is equivalent to the Peres-Horodecki criterion~\cite{Peres1996,HorodeckiMPR1996} for the set of locally maximally mixed states, but not for all two-qubit states.

This paper is structured as follows. We begin with a pedagogical review of the basic methods in Sect.~\ref{sec:methods}, discussing the geometric entanglement and separability characteristics in Sects.~\ref{sec:entanglement-separability} and~\ref{sec:entanglement witness}, the boundary between local and nonlocal states in Sects.~\ref{sec:nonlocality} and~\ref{sec:hidden nonlocality}, and the entropic correlation measures in Sects.~\ref{sec:entropy measures} and~\ref{sec:entropy}. We then present the results of our investigation in Sect.~\ref{sec:results}, where we discuss the geometric aspects of the conditional entropy within the set of locally maximally mixed states in Sect.~\ref{sec:geometry-two-qubit-states}, provide an example for the general inequivalence of the Peres-Horodecki and the CAO criterion in Sect.~\ref{sec:counterexample}, and discuss extensions to generalized entropies in Sect.~\ref{sec:Negativity of Generalized Conditional Entropies}. Finally, we draw conclusions in Sect.~\ref{sec:conclusion}.\\
\vspace*{-7mm}

\section{Methods}\label{sec:methods}
\vspace*{-1mm}

In this section we will provide a pedagogical review of the methods for the detection and quantification of entanglement and nonlocality relevant to this study. The reader already well familiar with the geometry of separable, entangled, and nonlocal states for two-qubits may skip directly to Sect.~\ref{sec:results}, where we present our results.

\vspace*{-1.5mm}
\subsection{Entanglement \& Separability}\label{sec:entanglement-separability}
\vspace*{-1.5mm}

Quantum states are described by density operators $\rho$, i.e., positive semi-definite ($\rho\geq0$), hermitean ($\rho=\rho^{\dagger}$, which of course follows from positivity) operators with unit trace, $\tr(\rho)=1$. These operators form a convex subset $\widetilde{\mathcal{H}}\subset\mathcal{L}(\Ha)$ in the Hilbert-Schmidt space $\mathcal{L}(\Ha)$ of linear operators over the Hilbert space $\Ha$ of pure states. Given a bipartition of the Hilbert space into two subsystems $A$ and $B$ with respect to the tensor product, $\mathcal{H}=\Ha\subA{A}\otimes\Ha\subA{B}$, one may classify the quantum states into separable and entangled states. The \emph{set} $\mathcal{S}$ \emph{of separable states} is defined by the convex (and compact) hull of product states
\begin{align}
    \hspace*{-1mm}\mathcal{S} &= \Big\{\rho = \sum_{n} p_{n} \,\rho\supA{A}_{n} \otimes \rho\supA{B}_{n} \,|\; 0\leq p_{n}\leq 1\,, \sum_{n} p_{n} =1  \Big\},
    \label{set-separable-states}
\end{align}
where $\rho\supA{A}_{n}$ and $\rho\supA{B}_{n}$ are density operators in $\widetilde{\Ha}\subA{A}\subset\mathcal{L}(\mathcal{H}\subA{A})$ and $\widetilde{\Ha}\subA{B}\subset\mathcal{L}(\mathcal{H}\subA{B})$, respectively. In contrast, any state that is not separable, i.e., which cannot be expressed as a convex combination of product states, is called \emph{entangled}. The set $\mathcal{S}^{c}$ of entangled states hence forms the complement of the set of separable states, such that $\mathcal{S} \cup \mathcal{S}^c = \widetilde{\Ha}$.

Here we would like to emphasize that the characterization of a given state as being entangled or separable very much depends on the choice of factorizing the algebra of the corresponding density matrix~\cite{ThirringBertlmannKoehlerNarnhofer2011,Zanardi2001}. From a practical point of view, this choice of bipartition is often suggested by the experimental setup, e.g., by the spatial separation of the observers Alice and Bob corresponding to subsystems $A$ and $B$, respectively. From the perspective of a theorist on the other hand, one has a freedom to choose the bipartition into two subsystems. While a given density operator may well be separable with respect to the decomposition $\widetilde{\Ha}=\widetilde{\Ha}\subA{A}\otimes\widetilde{\Ha}\subA{B}$, it may be entangled with respect to another factorization $\widetilde{\Ha}\subA{A\pr}\otimes\widetilde{\Ha}\subA{B\pr}$. Since such a different choice of bipartition corresponds to a change of basis in $\Ha$, it can be represented by a (global) unitary transformation. As shown in Ref.~\cite{ThirringBertlmannKoehlerNarnhofer2011}, every separable pure state admits a unitary operator transforming it to an entangled state, and vice versa. Interestingly, for mixed states this switch between separability and entanglement is only possible above a certain bound of purity. This implies that there exist quantum states which are separable with respect to all possible factorizations of the composite system into subsystems. This is the case if $U\rho\,U^\dag$ remains separable for any unitary transformation $U$. Such states are called \emph{absolutely separable states}~\cite{kus-zyczkowski2001, Zyczkowski-Bengtsson2006, Bengtsson-Zyczkowski-Book2007}. Geometrically one may think of the absolutely separable states as a convex and compact~\cite{GangulyChatterjeeMajumdar2014} subset $\mathcal{A}$ of the separable states $\mathcal{S}$, much like $\mathcal{S}$ forms a convex subset of all states. In particular, when $\dim(\widetilde{\Ha}\subA{A})=\dim(\widetilde{\Ha}\subA{B})=d$, one may inscribe a ball of maximal radius $r_{\mathrm{max}}=1/\bigl(d\sqrt{d^{2}-1}\bigr)$ into the set $\mathcal{S}$, where the distance of a state $\rho$ from the central maximally mixed state $\rho_{\mathrm{mix}}=\mathds{1}_{d^{2}}/d^{2}$ is measured by the trace distance $|\hspace*{-0.5pt}|\rho-\rho_{\mathrm{mix}}|\hspace*{-0.5pt}|=\sqrt{\tr(\rho-\rho_{\mathrm{mix}})^{2}}$. All states within this so-called  Ku\'s-\.Zyczkowski ball~\cite{kus-zyczkowski2001} are separable. Moreover, since the condition $r\leq r_{\mathrm{max}}$ translates to the purity as $\tr(\rho^{2})\leq1/\bigl(d^{2}-1\bigr)$ and (global) unitaries leave the purity invariant, all states within this maximal ball are also absolutely separable. However, note that not all absolutely separable states lie within this ball, see, e.g., Refs.~\cite{ThirringBertlmannKoehlerNarnhofer2011, kus-zyczkowski2001, GurvitsBarnum2002}.

\vspace*{-1mm}
The convex nesting hierarchy $\mathcal{A}\subset\mathcal{S}\subset\widetilde{\Ha}\subset\mathcal{L}(\Ha)$ holds for (bipartite) quantum systems of arbitrary dimensions $\dim(\widetilde{\Ha}\subA{A})=d\subA{A}$ and $\dim(\widetilde{\Ha}\subA{B})=d\subA{B}$. The density operators of such systems can be written in a generalized Bloch-Fano decomposition~\cite{Fano1983,BertlmannKrammer2008b} as
\vspace*{-2mm}
\begin{align}
    \rho    &=\,\frac{1}{d\subA{A}d\subA{B}}\Bigl(\mathds{1}\subA{AB}\,+\,
    \sum\limits_{m=1}^{d\subA{A}^{2}-1}a_{m}\,\sigA_{m}\otimes\mathds{1}\subA{B}
        +\,\sum\limits_{n=1}^{d\subA{B}^{2}-1}b_{n}\,\mathds{1}\subA{A}\!\otimes\sigB_{n}\nonumber\\
        &\qquad +\,\sum\limits_{m=1}^{d\subA{A}^{2}-1}\sum\limits_{n=1}^{d\subA{B}^{2}-1}\,t_{mn}\,\sigA_{m}\otimes\sigB_{n}\Bigr)\,,
    \label{eq:density operator bloch decomp}
\end{align}
where the hermitean operators $\sigi_{m}$ for $i=A,B$ are the generalizations of the Pauli matrices, i.e., they are orthogonal in the sense that $\tr(\sigi_{m}\sigi_{n})=2\delta_{mn}$, and traceless, $\tr(\sigi_{m})=0$, and they coincide with the Pauli matrices for dimension $2$. The coefficients $a_{m},b_{n}\in\mathbb{R}$ are the components of the generalized Bloch vectors $\vec{a}$ and $\vec{b}$ of the subsystems $A$ and $B$, respectively, which completely determine the reduced states $\rho\subA{A}=\tr\subA{B}(\rho)$ and $\rho\subA{B}=\tr\subA{A}(\rho)$. The real coefficients $t_{mn}$ are the components of the so-called correlation tensor. Note that $a_{m}$, $b_{n}$, and $t_{mn}$ cannot be chosen completely independently, but are jointly constrained by the positivity of~$\rho$.

An interesting subset of the state space is given by the set $\mathcal{W}$ of \emph{locally maximally mixed states} or \emph{Weyl states}, that is, the set of quantum systems with vanishing Bloch vectors, $a_{m}=b_{n}=0\ \forall m,n$ such that $\rho\subA{A}=\mathds{1}\subA{A}/d\subA{A}$ and $\rho\subA{B}=\mathds{1}\subA{B}/d\subA{B}$. The set $\mathcal{W}$ contains all the maximally entangled states (for which the marginals $\rho\subA{A}$ and $\rhoA{B}$ are maximally mixed) and the uncorrelated maximally mixed state $\rho=\mathds{1}\subA{AB}/(d\subA{A}d\subA{B})$, and all states in $\mathcal{W}$ are fully determined by their correlation tensors $t=(t_{mn})$. For $d\subA{A}=d\subA{B}=d$, the singular value decomposition of the correlation tensor allows bringing $t$ to a diagonal form $\tilde{t}=\diag\{\tilde{t}_{n}\}$ using two orthogonal transformations $R_{1}$ and $R_{2}$, such that $R_{1}t\hspace*{1pt}R_{2}=\tilde{t}$. Moreover, these orthogonal transformations can be realized by local unitaries $U_{1}\otimes U_{2}$, which do not change the entanglement (or the entropy) of the state. This means that, up to local unitaries, all Weyl states for $d\subA{A}=d\subA{B}=d$ can be represented by vectors in $\mathbb{R}^{d^{2}-1}$ with components $\tilde{t}_{n}$ and density operators
\vspace*{-4mm}
\begin{align}
    \rho    &=\,\frac{1}{d^{2}}\Bigl(\mathds{1}\subA{AB}\,+\,
    \sum\limits_{n=1}^{d^{2}-1}\,\tilde{t}_{n}\,\sigA_{n}\otimes\sigB_{n}\Bigr)\,.
    \vspace*{-1mm}
    \label{eq:density operator bloch decomp Weyl states}
\end{align}
The vector components $\tilde{t}_{n}$ are constrained by the positivity of $\rho$, and the allowed vectors map out a convex set in $\mathbb{R}^{d^{2}-1}$. In the case of two qubits, i.e., when $d=2$, the Weyl states can hence be nicely illustrated in $\mathbb{R}^{3}$, where $-1\leq\tilde{t}_{n}\leq+1$ and (up to local unitaries) the set $\mathcal{W}$ forms a tetrahedron, shown in Fig.~\ref{fig:tetrahedron of physical states}. The four Bell states $\ket{\Phi^{\pm}}=\tfrac{1}{\sqrt{2}}\bigl(\ket{00}\pm\ket{11}\bigr)$ and $\ket{\Psi^{\pm}}=\tfrac{1}{\sqrt{2}}\bigl(\ket{01}\pm\ket{10}\bigr)$, where $\ket{0}$ and $\ket{1}$ are the eigenstates of the third Pauli matrix with eigenvalues $+1$ and $-1$, respectively, are located at the four corners of the tetrahedron at $(\tilde{t}_{1},\tilde{t}_{2},\tilde{t}_{3})=(1,-1,1)$ ($\ket{\Phi^{+}}$), $(-1,1,1)$ ($\ket{\Phi^{-}}$), $(1,1,-1)$ ($\ket{\Psi^{+}}$), and $(-1,-1,-1)$ ($\ket{\Psi^{-}}$), while the maximally mixed state $\rho_{\mathrm{mix}}=\tfrac{1}{4}\mathds{1}\subA{4}$ is located at the origin $(0,0,0)$.\\
\vspace*{-4.5mm}

The region of separability is determined by the so-called positive partial transpose (PPT) criterion established by Peres~\cite{Peres1996} and the Horodecki family~\cite{HorodeckiMPR1996}. The criterion allows to identify bipartite quantum states as entangled, if the partial transposition of their density operator does not yield a positive operator. Given a density matrix $\rho$ in the general Bloch decomposition of Eq.~(\ref{eq:density operator bloch decomp}), the partial transposition corresponds to the transposition of the (generalized) Pauli operators in one of the subsystems, e.g., $(\sigA_{n})_{kl}\rightarrow(\sigA_{n})_{lk}$. In $2 \times 2$ and $2 \times 3$ dimensions the PPT criterion is necessary and sufficient to detect entanglement, but in higher dimensions entangled states can have a positive partial transpose. In our example of the Weyl states, the positivity constraint of the partial transpose identifies the separable Weyl states to lie within a double pyramid (see Refs.~\cite{BertlmannNarnhoferThirring2002, VollbrechtWerner2000, Horodecki-R-M1996}) with corners at $(\tilde{t}_{1},\tilde{t}_{2},\tilde{t}_{3})=(\pm1,0,0)$, $(0,\pm1,0)$ and $(0,0,\pm1)$, as illustrated in Fig.~\ref{fig:tetrahedron of physical states}. The maximal Ku{\'s}-{\.Z}yczkowski ball~\cite{kus-zyczkowski2001} of absolutely separable states lies within the

\newpage
\begin{figure}
	\centering
    %%%trim={<left> <lower> <right> <upper>}
	\includegraphics[width=0.48\textwidth,trim={0cm 0.95cm 0cm 0.6cm},clip]{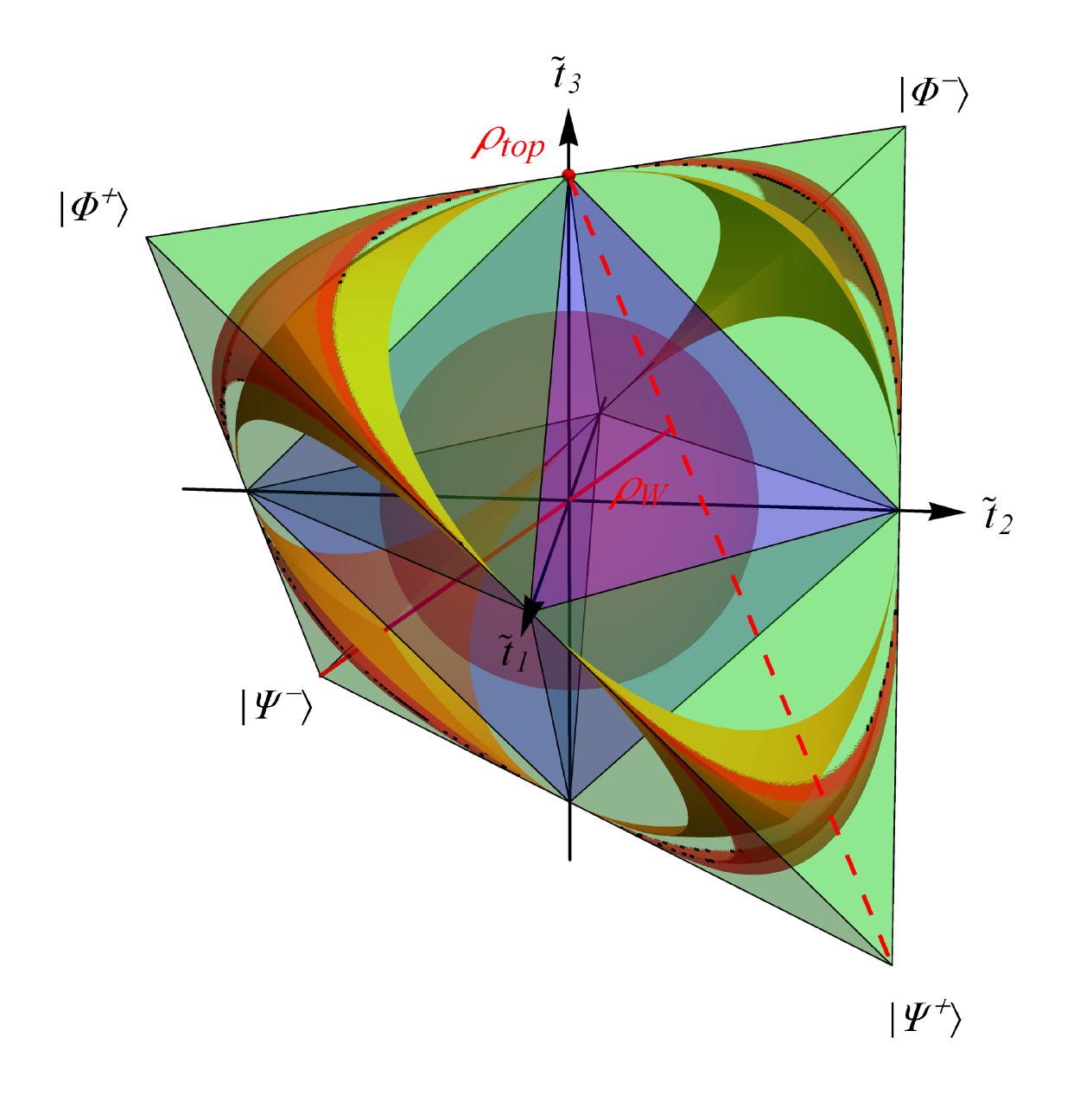}
%Tetraeder04_corr_2.pdf}
    \vspace*{-6mm}
	\caption{\textbf{Tetrahedron of Weyl states}. All states in the set $\mathcal{W}$ can (up to local unitaries) be geometrically represented as a tetrahedron spanned by the four Bell states $\ket{\Phi^{\pm}}$, $\ket{\Psi^{\pm}}$. The singular values $(\tilde{t}_{1},\tilde{t}_{2},\tilde{t}_{3})$ of the correlation tensor serve as coordinates. The set $\mathcal{S}$ of separable states forms a double pyramid (blue) and the entangled states are located in the remaining corners outside. The maximally mixed state $\rho_{\mathrm{mix}}$ is located at the origin, and the maximal ball of absolutely separable states (purple) around $\rho_{\mathrm{mix}}$ is contained within $\mathcal{S}$, but touches the double pyramid at the central points of its eight faces. The states located at the tips of the double pyramid, for instance, the Narnhofer state $\rho_{\rm{N}}$ at the coordinates $(1,0,0)$, are the separable states with maximal purity. All states that cannot violate the CHSH-Bell inequality lie within the dark-yellow, curved surfaces drawn outside the double pyramid, which includes also some entangled states, whereas all states with positive conditional entropy lie within the outermost curved surfaces (red), as discussed in Section~\ref{sec:geometry-two-qubit-states}. The solid red line indicates the Werner states $\rho_{\mathrm{W}}$ of Eq.~(\ref{eq:Werner states}), while the dashed red line represents the subset of Gisin states with $\theta=\pi/4$, see Eq.~(\ref{eq:Gisin state}).}
	\label{fig:tetrahedron of physical states}
\vspace*{-3mm}
\end{figure}
\vspace*{-3mm}
\noindent
double pyramid and touches the faces of the pyramids at the points where $|\tilde{t}_{1}|=|\tilde{t}_{2}|=|\tilde{t}_{3}|=\tfrac{1}{3}$, four of which mark the closest separable states to the four Bell states. The entangled Weyl states are located in the four corners of the tetrahedron outside the double pyramid, extending from the separable states to the maximally entangled Bell states at the tips.

\vspace*{-3mm}
\subsection{Entanglement Witnesses}\label{sec:entanglement witness}
\vspace*{-1.5mm}

The geometric picture that presents itself for the two-qubit Weyl states, i.e., the separation of separable from entangled states by planes (the faces of the double pyramid), can indeed be generalized to arbitrary dimensions. Owing to the convex structure of the set~$\mathcal{S}$ and the Hahn-Banach theorem (see, e.g., Ref.~\cite[p.~75]{ReedSimon1972}), one may define so-called entanglement witness operators via the following theorem~\cite{HorodeckiMPR1996, BertlmannNarnhoferThirring2002, Terhal2000}.

\begin{theorem}[Entanglement~Witness~Theorem]\ \\[0.8mm]
A state $\rho$ is entangled if and only if there exists a hermitian operator $W$ \textemdash\ an entanglement witness \textemdash\ such that
\vspace*{-5mm}
\begin{subequations}
\label{def-entwit}
\begin{align}
    \HSpr{W}{\rho} &<\,0\,,
    \label{eq:def-entwit ent state}\\
    \HSpr{W}{\rho_{\mathrm{sep}}} &\geq\,0\ \ \forall\,\rho_{\mathrm{sep}}\in\mathcal{S},
    \label{eq:def-entwit sep states}
\end{align}
\end{subequations}
where the Hilbert-Schmidt inner product is defined as $\HSpr{A}{B}:=\tr\bigl(A^{\dagger}B\bigr)$ for any $A,B\in\mathcal{L}(\Ha)$ and $\mathcal{S}$ denotes the set of separable states from Eq.~\emph{(\ref{set-separable-states})}.
\label{theorem:entanglement-witness-theorem}
\end{theorem}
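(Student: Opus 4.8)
The plan is to prove the two implications separately; the ``if'' direction is trivial and the entire mathematical content lies in the ``only if'' direction, which I would obtain as a direct corollary of the Hahn-Banach separation theorem. For the ``if'' direction, suppose a hermitian $W$ satisfying the conditions~(\ref{def-entwit}) exists and, for contradiction, that $\rho$ is separable, $\rho\in\mathcal{S}$. Evaluating condition~(\ref{eq:def-entwit sep states}) at $\rho_{\mathrm{sep}}=\rho$ then gives $\HSpr{W}{\rho}\geq0$, in direct contradiction with~(\ref{eq:def-entwit ent state}); hence $\rho$ must be entangled.

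For the ``only if'' direction I would first fix the appropriate ambient space: the set $\mathcal{S}$ of separable states lives in the real vector space $\mathcal{L}_{\mathrm{h}}(\Ha)$ of hermitian operators on $\Ha=\Ha\subA{A}\otimes\Ha\subA{B}$, which the Hilbert-Schmidt inner product $\HSpr{A}{B}=\tr(AB)$ turns into a finite-dimensional real Hilbert space. As recalled above, $\mathcal{S}$ is convex and compact, hence in particular closed. If $\rho$ is entangled then $\rho\notin\mathcal{S}$, so $\rho$ lies at strictly positive distance from the closed convex set $\mathcal{S}$, and the Hahn-Banach separation theorem in its geometric form (cf.\ Ref.~\cite[p.~75]{ReedSimon1972}) provides a nonzero continuous linear functional $f$ on $\mathcal{L}_{\mathrm{h}}(\Ha)$ and a real number $c$ with $f(\rho)<c\leq f(\sigma)$ for all $\sigma\in\mathcal{S}$. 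By the Riesz representation theorem, applied in the real Hilbert space $\mathcal{L}_{\mathrm{h}}(\Ha)$, this functional is of the form $f(X)=\HSpr{C}{X}$ for a (necessarily hermitian) operator $C$.

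It then only remains to shift the separating hyperplane into the homogeneous form required by the definition of an entanglement witness, and this is precisely where the unit-trace normalization of states enters: since $\HSpr{\mathds{1}}{\sigma}=\tr(\sigma)=1$ for every density operator $\sigma$, the hermitian operator $W:=C-c\,\mathds{1}$ satisfies $\HSpr{W}{\rho}=f(\rho)-c<0$ and $\HSpr{W}{\rho_{\mathrm{sep}}}=f(\rho_{\mathrm{sep}})-c\geq0$ for all $\rho_{\mathrm{sep}}\in\mathcal{S}$, i.e., $W$ obeys~(\ref{def-entwit}) and is the sought-after witness.

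The one point that needs care --- the ``hard part'', such as it is --- is the functional-analytic bookkeeping: one has to run the Hahn-Banach and Riesz arguments in the \emph{real} space of hermitian operators rather than in the complex Hilbert-Schmidt space, so that the normal vector $C$ of the separating hyperplane is automatically hermitian and hence a legitimate observable, and one has to use the closedness of $\mathcal{S}$ to secure the strict separation of $\rho$ from $\mathcal{S}$, which is what yields the clean inequality $\HSpr{W}{\rho_{\mathrm{sep}}}\geq0$ on the separable side. In the finite-dimensional two-qubit situation of interest these steps are routine; in infinite dimensions the same scheme would additionally require the weak-$*$ topology and the Banach-space version of the separation theorem.
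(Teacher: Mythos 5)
Your proposal is correct and follows exactly the route the paper itself indicates, namely separating the entangled state $\rho$ from the convex, compact set $\mathcal{S}$ via the Hahn--Banach theorem and then absorbing the offset $c$ into the witness as $W=C-c\,\mathds{1}$ using the unit-trace normalization; the paper only sketches this argument by citing the convexity of $\mathcal{S}$ and Ref.~\cite[p.~75]{ReedSimon1972}, so your write-up is simply the fully detailed version of the same proof. Your remarks on working in the real vector space of hermitian operators and on the closedness of $\mathcal{S}$ are both accurate and are the right points to flag.
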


Geometrically, a witness operator for a given state $\rho$ defines a hyperplane in the Hilbert space $\tilde{\Ha}$ that separates the set~$\mathcal{S}$ from the point representing the state $\rho$. An entanglement witness~$W_{\mathrm{opt}}$ is called \emph{optimal} if in addition to the requirements of Eq.~(\ref{def-entwit}) there exists a separable state $\rho_{\mathrm{sep}}\in \mathcal{S}$ such that $\HSpr{W_{\mathrm{opt}}}{\rho_{\mathrm{sep}}}=0$. The operator $W_{\mathrm{opt}}$ defines a tangent plane to the convex set of separable states.

On the other hand, the minimal (trace) distance of an entangled state~$\rho$ from the set~$\mathcal{S}$, the \emph{Hilbert-Schmidt measure} $ D(\rho)$ given by
\begin{align}
    D(\rho) &=\,\min_{\rho_{\mathrm{sep}}\in\mathcal{S}}\,|\hspace*{-0.5pt}|\rho-\rho_{\mathrm{sep}}|\hspace*{-0.5pt}|\,=\,
    |\hspace*{-0.5pt}|\rho-\rho_{0}|\hspace*{-0.5pt}|\,\nonumber\\
    &=\,\sqrt{\HSpr{\rho-\rho_{0}}{\rho-\rho_{0}}}
    \label{eq:Hilbert Schmidt measure}
\end{align}
can be viewed as a measure of entanglement, where the state $\rho_{0}$ is called the nearest separable state to $\rho$. An interesting connection between the Hilbert-Schmidt measure and the entanglement witness inequality arises when we define the maximal violation of the entanglement witness inequality as
\begin{align}
   B(\rho) &:=\, \max_{W} \,\bigl[\min_{\rho_{\mathrm{sep}}\in\mathcal{S}} \HSpr{W}{\rho_{\mathrm{sep}}}\,-\,
   \HSpr{W}{\rho}\bigr]\,.
   \label{max-violation-EWI}
\end{align}
Here, the minimum is taken over all separable states and the maximum over all possible entanglement witnesses $W=W^{\dagger}\in L(\Ha)$ that are suitably normalized, i.e., $|\hspace*{-0.5pt}|W|\hspace*{-0.5pt}|=1$. With this, we can formulate the following Theorem~\cite{BertlmannNarnhoferThirring2002}.

\begin{theorem}[Bertlmann-Narnhofer-Thirring]\label{theorem:BNT-theorem}\ \\[1mm]
The maximal violation of the entanglement witness inequality is equal to the Hilbert-Schmidt measure, i.e.,
\begin{align}
    B(\rho) &=\, D(\rho)\,,
    \label{B-equals-D}
\end{align}
and is achieved for $\rho \rightarrow \rho_{0}$ and $W \rightarrow W_{\mathrm{opt}}$, where the optimal entanglement witness is given by
\begin{align}
    W_{\mathrm{opt}}    &=\,\frac{\HSpr{\rho_{0}}{\rho-\rho_{0}}\,-\,(\rho-\rho_{0})}{|\hspace*{-0.5pt}|\rho-\rho_{0}|\hspace*{-0.5pt}|}\,.
    %\rho_{0} \,-\, \rho_{\rm ent} \,-\, \left\langle \rho_0|\rho_0 \,-\, \rho_{\rm ent} \right\rangle \mathds{1}}{\left\| \rho_0 \,-\, \rho_{\rm ent} \right\|} \,.
    \label{Aopt-explicit-expression}
\end{align}
\end{theorem}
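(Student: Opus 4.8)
The plan is to establish the two inequalities $B(\rho)\le D(\rho)$ and $B(\rho)\ge D(\rho)$ separately, obtaining the second by exhibiting an explicit witness that saturates the bound. Throughout I would work in the real Hilbert space of Hermitian operators on $\Ha$ equipped with the Hilbert--Schmidt inner product $\HSpr{\cdot}{\cdot}$, in which $\mathcal{S}$ is closed and convex, and I would assume $\rho\notin\mathcal{S}$, so that the nearest separable state $\rho_{0}$ differs from $\rho$ and $D(\rho)=\|\rho-\rho_{0}\|>0$ (the separable case being trivial).

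For the upper bound: let $W=W^{\dagger}$ with $\|W\|=1$. Since $\rho_{0}\in\mathcal{S}$, the inner minimisation in Eq.~(\ref{max-violation-EWI}) obeys $\min_{\rho_{\mathrm{sep}}\in\mathcal{S}}\HSpr{W}{\rho_{\mathrm{sep}}}\le\HSpr{W}{\rho_{0}}$, hence $\min_{\rho_{\mathrm{sep}}}\HSpr{W}{\rho_{\mathrm{sep}}}-\HSpr{W}{\rho}\le\HSpr{W}{\rho_{0}-\rho}\le\|W\|\,\|\rho_{0}-\rho\|=D(\rho)$ by the Cauchy--Schwarz inequality. Taking the maximum over $W$ gives $B(\rho)\le D(\rho)$; this direction is immediate.

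The substantive direction is the reverse inequality, and its key input is the variational characterisation of the metric projection of $\rho$ onto the convex set $\mathcal{S}$: $\rho_{0}$ is the unique closest point, and it satisfies $\HSpr{\rho-\rho_{0}}{\rho_{\mathrm{sep}}-\rho_{0}}\le0$ for all $\rho_{\mathrm{sep}}\in\mathcal{S}$, \ie\ the hyperplane through $\rho_{0}$ with normal $\rho_{0}-\rho$ supports $\mathcal{S}$ — this is precisely where the convexity of $\mathcal{S}$ and the Hahn--Banach/projection theorem alluded to before Theorem~\ref{theorem:entanglement-witness-theorem} enter. I would then set $\hat W:=(\rho_{0}-\rho)/\|\rho_{0}-\rho\|$, which is Hermitian, normalised, and automatically traceless since $\tr(\rho_{0}-\rho)=0$. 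Writing $\rho_{\mathrm{sep}}-\rho=(\rho_{\mathrm{sep}}-\rho_{0})+(\rho_{0}-\rho)$ and invoking the support inequality yields $\HSpr{\hat W}{\rho_{\mathrm{sep}}-\rho}\ge\|\rho_{0}-\rho\|=D(\rho)$ for all $\rho_{\mathrm{sep}}\in\mathcal{S}$, with equality at $\rho_{\mathrm{sep}}=\rho_{0}$. Hence $\min_{\rho_{\mathrm{sep}}}\HSpr{\hat W}{\rho_{\mathrm{sep}}}-\HSpr{\hat W}{\rho}=D(\rho)$, so $B(\rho)\ge D(\rho)$, which together with the upper bound proves Eq.~(\ref{B-equals-D}); by construction $\hat W$ is an optimal entanglement witness in the sense of Theorem~\ref{theorem:entanglement-witness-theorem}. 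To recover the closed form~(\ref{Aopt-explicit-expression}), I would note that adding any real multiple of $\id$ to a witness changes neither $\HSpr{W}{\rho_{\mathrm{sep}}}-\HSpr{W}{\rho}$ (because $\tr\rho_{\mathrm{sep}}=\tr\rho=1$) nor its witnessing properties; the particular shift $\lambda=\HSpr{\rho_{0}}{\rho-\rho_{0}}/\|\rho-\rho_{0}\|$ turns $\hat W$ into $W_{\mathrm{opt}}$, which in addition satisfies $\HSpr{W_{\mathrm{opt}}}{\rho_{0}}=0$ (so the minimum over $\mathcal{S}$ is attained exactly at $\rho_{0}$) and $\HSpr{W_{\mathrm{opt}}}{\rho}=-D(\rho)$, making the limiting statement $W\to W_{\mathrm{opt}}$, $\rho\to\rho_{0}$ transparent.

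The only genuinely non-trivial ingredient is the supporting-hyperplane statement at $\rho_{0}$, that is, the variational inequality for the projection onto the convex body $\mathcal{S}$; everything downstream is bookkeeping. A secondary point to watch is the interplay between the freedom to shift a witness by $\id$ and the normalisation $\|W\|=1$: the legitimate competitor in the maximisation~(\ref{max-violation-EWI}) is the traceless norm-one $\hat W$, while $W_{\mathrm{opt}}$ in~(\ref{Aopt-explicit-expression}) is its (in general non-normalised) tangent-plane representative defining the same value of the functional. Finally one should note that the outer maximum is attained, which follows from compactness of the Hermitian unit sphere in finite dimensions together with concavity and continuity of $W\mapsto\min_{\rho_{\mathrm{sep}}}\HSpr{W}{\rho_{\mathrm{sep}}-\rho}$ (a pointwise minimum of linear functionals of $W$).
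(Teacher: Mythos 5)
Your argument is correct. The paper itself does not prove Theorem~\ref{theorem:BNT-theorem}: it states the result with a citation to the original work of Bertlmann, Narnhofer and Thirring and then only \emph{verifies} it on the single example $\rho^{-}$ by exhibiting $W_{\mathrm{opt}}^{\rho^{-}}$ and the nearest separable state $\rho_{0}$. Your proof supplies the general argument, and it is essentially the standard one from the cited reference: Cauchy--Schwarz gives $B(\rho)\leq D(\rho)$, and the obtuse-angle (variational) characterisation of the metric projection onto the compact convex set $\mathcal{S}$ gives the supporting functional that saturates the bound, with the identity shift producing the tangent-plane form~(\ref{Aopt-explicit-expression}). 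You also correctly flag the normalisation subtlety \textemdash\ $W_{\mathrm{opt}}$ as written in~(\ref{Aopt-explicit-expression}) is generally not Hilbert-Schmidt normalised (indeed $\|W_{\mathrm{opt}}^{\rho^{-}}\|=2/\sqrt{3}$ in the paper's own example), while the legitimate competitor in~(\ref{max-violation-EWI}) is the traceless unit-norm $\hat W$ \textemdash\ which the paper glosses over. One cosmetic imprecision: you call $\hat W$ itself an ``optimal entanglement witness in the sense of Theorem~\ref{theorem:entanglement-witness-theorem},'' but $\hat W$ need not satisfy the sign conditions~(\ref{eq:def-entwit ent state})--(\ref{eq:def-entwit sep states}) individually; it is only the shifted representative $W_{\mathrm{opt}}$, with $\HSpr{W_{\mathrm{opt}}}{\rho_{0}}=0$ and $\HSpr{W_{\mathrm{opt}}}{\rho}=-D(\rho)$, that does. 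Since you make exactly this point in your closing remarks, this is a matter of wording rather than a gap.
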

\vspace*{4mm}

As an example, consider the totally antisymmetric Bell state\\
\vspace*{-6.5mm}
\begin{align}
    \rho^{-}    &=\,\ket{\Psi^{-}}\!\nl\bra{\Psi^{-}}\,=\,\tfrac{1}{4}\bigl(\mathds{1}\subtiny{0}{0}{2}\otimes\mathds{1}\subtiny{0}{0}{2}\,-\,
    \vec{\sigma}\otimes\vec{\sigma}\bigr)\,,
    \label{eq:rho-Bell-minus}
\end{align}
where $\vec{\sigma}\otimes\vec{\sigma}=\sum_{n=1}^{3}\sigma_{n}\otimes\sigma_{n}$ is used as a shorthand and the $\sigma_{n}$ are the usual Pauli matrices. The optimal entanglement witness $W^{\rho^{-}}_{\mathrm{opt}}$ for this state is given by
\begin{align}
    W_{\mathrm{opt}}^{\rho^{-}} &=\, \tfrac{1}{2\sqrt{3}}\bigl(\mathds{1}\subtiny{0}{0}{2}\otimes\mathds{1}\subtiny{0}{0}{2}\,+\,
    \vec{\sigma}\otimes\vec{\sigma}\bigr)\,.
    \label{entwit rho-Bell minus}
\end{align}
The Hilbert-Schmidt product with the entangled state of Eq.~(\ref{eq:rho-Bell-minus}) yields
\begin{align}
    \HSpr{W_{\mathrm{opt}}^{\rho^{-}}}{\rho^{-}}    &=\,\tr\bigl(\nr W_{\mathrm{opt}}^{\rho^{-}}\nr\rho^{-}\nr\bigr)\,=\,-\nr \frac{1}{\sqrt{3}}\,<0\,,
    \label{eq:rhominus ent wit HS product}
\end{align}
as required for an entanglement witness in inequality~(\ref{eq:def-entwit ent state}). To confirm that also the second inequality~(\ref{eq:def-entwit sep states}) is satisfied, first note that any separable state can be written as a convex combination of product states with local Bloch vectors $\vec{a}_{i}$ and $\vec{b}_{i}$, that is,
\begin{align}
    \rho_{\mathrm{sep}} &=\sum_{i} p_{i} \,\rho\supA{A}_{i} \otimes \rho\supA{B}_{i}\nr=\sum_{i} \frac{p_{i}}{4}\,\bigl(\mathds{1}\subA{A}+\vec{a}_{i}\vec{\sigma}\bigr)\otimes\bigl(\mathds{1}\subA{B}+\vec{b}_{i}\vec{\sigma}\bigr).
    %\nonumber%
    \label{eq:separable states}
\end{align}
The correlation tensor of any separable state hence has components $t_{mn}=\sum_{i}p_{i}a_{i}^{m}b_{i}^{n}$ and its trace is given by
\begin{align}
    \tr(t)  &=\,\sum_{i}p_{i}\,\vec{a}_{i}\cdot\vec{b}_{i}\,=\,\sum_{i}p_{i}|\vec{a}_{i}|\,|\vec{b}_{i}|\nr \cos\delta_{i}\,,
    %\nonumber
\end{align}
where $\delta_{i}$ is the angle between the Bloch vectors $\vec{a}_{i}$ and $\vec{b}_{i}$. For two qubits we have $|\vec{a}_{i}|,|\vec{b}_{i}|\leq1$ and hence $-1\leq\tr(t)\leq1$. If we then compute the Hilbert-Schmidt product of $W_{\mathrm{opt}}^{\rho^{-}}$ with an arbitrary separable state we therefore find
\begin{align}
    \HSpr{W_{\mathrm{opt}}^{\rho^{-}}}{\rho_{\mathrm{sep}}}    &=\,\tr\bigl(\nr W_{\mathrm{opt}}^{\rho^{-}}\nr\rho_{\mathrm{sep}}\nr\bigr)\,=\,
    \frac{1}{2\sqrt{3}}\bigl(\nr 1\nr +\tr(t)\nr\bigr)\,\geq\,0\,,
    %\nonumber
\end{align}
as required in~(\ref{eq:def-entwit sep states}). The nearest separable state $\rho_{0}$, for which $\HSpr{W_{\mathrm{opt}}^{\rho^{-}}}{\rho_{0}}=0$ is given by
\begin{align}
    \rho_{0}    &=\,\tfrac{1}{4}\bigl(\mathds{1}\subtiny{0}{0}{2}\otimes\mathds{1}\subtiny{0}{0}{2}\,-\,
    \tfrac{1}{3}\vec{\sigma}\otimes\vec{\sigma}\bigr)\,,
    \label{eq:rho-Bell-minus nearest sep state}
\end{align}
which can be seen to lie on the face (closest to the corner representing $\ket{\Psi^{-}}$) of the double pyramid illustrated in Fig.~\ref{fig:tetrahedron of physical states}. The optimal witness $W_{\mathrm{opt}}^{\rho^{-}}$ from Eq.~(\ref{Aopt-explicit-expression}) hence defines the plane containing this face of the pyramid. Finally, we can now easily compute the Hilbert-Schmidt measure $D(\rho^{-})$ from Eq.~(\ref{eq:Hilbert Schmidt measure}) and compare it to Eq.~(\ref{eq:rhominus ent wit HS product}), obtaining
\begin{align}
    D(\rho^{-}) &=\,|\hspace*{-0.5pt}|\rho^{-}-\rho_{0}|\hspace*{-0.5pt}|\,=\,\frac{1}{\sqrt{3}}\,=\,-\,\HSpr{W_{\mathrm{opt}}^{\rho^{-}}}{\rho^{-}}\,.
\end{align}
Since $\min_{\rho_{\mathrm{sep}}\in\mathcal{S}} \HSpr{W}{\rho_{\mathrm{sep}}}=\HSpr{W_{\mathrm{opt}}^{\rho^{-}}}{\rho_{0}}=0$ we can therefore conclude from Eq.~(\ref{max-violation-EWI}) that, indeed, $D(\rho^{-})=B(\rho^{-})$, as claimed in Theorem~\ref{theorem:BNT-theorem}.

\subsection{Bell Inequalities \& Nonlocality}\label{sec:nonlocality}

Let us now turn from the geometric aspects of entanglement to the property referred to as \emph{nonlocality}. A quantum state is said to be nonlocal if it allows for the violation of a Bell inequality~\cite{Bell1964, Bell:Speakable1987}. This terminology originates in Bell's locality hypothesis for local hidden-variable theories. In such models, the possible measurement outcomes $A$ and $B$ of two (distant) parties are determined by a hidden parameter $\lambda$. These theories are local in the sense that the values $A=A(\lambda,\vec{a})$ and $B=B(\lambda,\vec{b})$ depend on their local measurement settings $\vec{a}$ and $\vec{b}$, respectively, but not on the setting of the other party. As can be shown~\cite{Bell1964, Bell:Speakable1987}, combinations of expectation values of local hidden-variable models are constrained by Bell inequalities, which may be violated by certain (entangled) quantum states.\\

%\vspace*{-1.5mm}
To be more specific, we consider the Clauser-Horne-Shimony-Holt (CHSH) inequality~\cite{ClauserHorneShimonyHolt1969, Bell:Speakable1987} which, in analogy to the entanglement witness inequalities (Theorem~\ref{theorem:entanglement-witness-theorem}), can be written as
\begin{align}
    \HSpr{2\mathds{1}-\B\subtiny{0}{0}{\mathrm{CHSH}}}{\rho_{\nr\mathrm{loc}}} &\geq\,0\,,
    %\langle \rho_{\,\rm loc} | \, 2\cdot\mathds{1} \,-\, \B_{\rm{CHSH}} \rangle \;\geq\; 0 \;,
    \label{eq:CHSH-inequality-density-matrix-separable}
\end{align}
where the CHSH-Bell operator $\B\subtiny{0}{0}{\mathrm{CHSH}}$ is given by
\begin{align}
    \B\subtiny{0}{0}{\mathrm{CHSH}}  &=\; \vec{a}\cdot\vecsigA \otimes (\vec{b} - \vec{b}^{\nr\prime})\cdot\vecsigB \,+\,
    \vec{a}^{\nr\prime}\cdot\vecsigA \otimes (\vec{b} + \vec{b}^{\nr\prime})\cdot\vecsigB \;,
    \nonumber%\label{CHSH-Bell operator}
\end{align}
and the vectors $\vec{a},\nr\vec{b},\nr\vec{a}^{\nr\prime},\nr\vec{b}^{\nr\prime}\in\mathbb{R}^{3}$ denote the measurement directions. All local (and all separable) states $\rho_{\nr\mathrm{loc}}$ satisfy the inequality~(\ref{eq:CHSH-inequality-density-matrix-separable}). On the other hand, for nonlocal states, like the maximally entangled Bell state $\rho^{-}$ of Eq.~(\ref{eq:rho-Bell-minus}), the CHSH inequality can be violated for some choice of measurement directions. That is, there exist states $\rho_{\mathrm{nonloc}}$ and settings $\vec{a},\nr\vec{b},\nr\vec{a}^{\nr\prime},\nr\vec{b}^{\nr\prime}$ such that
\begin{align}
    \HSpr{2\mathds{1}-\B\subtiny{0}{0}{\mathrm{CHSH}}}{\rho_{\mathrm{nonloc}}} &<\,0\,,
    \label{eq:CHSH-inequality-density-matrix-Bellstate}
\end{align}
mirroring Eq.~(\ref{eq:def-entwit ent state}) in Theorem~\ref{theorem:entanglement-witness-theorem}. Since all separable states are local, the operator $(2\mathds{1}-\B\subtiny{0}{0}{\mathrm{CHSH}})$ can be seen as a witness for nonlocality, and as a (non-optimal) entanglement witness. Unfortunately, this witness is not useful for arbitrary measurement settings. However, the cumbersome task of explicitly determining the directions $\vec{a},\nr\vec{b},\nr\vec{a}^{\nr\prime},\nr\vec{b}^{\nr\prime}$ can be circumvented via another powerful theorem by the Horodecki family~\cite{HorodeckiRPM1995}.\\

%\vspace*{-2.5mm}
\begin{theorem}[CHSH operator criterion]\label{theorem:CHSH criterion}\ \\[1mm]
Let $\rho$ be the density operator of a two-qubit state with correlation tensor $t=(t_{mn})$, see Eq.~\emph{(\ref{eq:density operator bloch decomp})}, and let $\mu_{1}$ and $\mu_{2}$ be the two largest eigenvalues of $M_{\rho}=t^{T}t$. The state is nonlocal if $\overline{\B}\subtiny{0}{0}{\mathrm{CHSH}}\suptiny{0}{0}{\,\mathrm{max}}$, the maximally possible expectation value of the Bell-CHSH operator, is larger than $2$, i.e., if
\begin{align}
    \overline{\B}\subtiny{0}{0}{\mathrm{CHSH}}\suptiny{0}{0}{\,\mathrm{max}} &=\,
    \max_{\vec{a},\nr\vec{b},\nr\vec{a}^{\nr\prime},\nr\vec{b}^{\nr\prime}}\expval{\B\subtiny{0}{0}{\mathrm{CHSH}}} \,=\,
    2\sqrt{\mu_{1}+\mu_{2}}\,>\,2\,.
    \label{eq:max BellCHSH value}
\end{align}
\end{theorem}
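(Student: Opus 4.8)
The plan is to evaluate $\expval{\B\subtiny{0}{0}{\mathrm{CHSH}}}=\tr\bigl(\rho\,\B\subtiny{0}{0}{\mathrm{CHSH}}\bigr)$ directly from the Bloch--Fano form~(\ref{eq:density operator bloch decomp}) and then to maximize the resulting expression over the four measurement directions, following~\cite{HorodeckiRPM1995}. Since every term of $\B\subtiny{0}{0}{\mathrm{CHSH}}$ is of the form $\sigA_{m}\otimes\sigB_{n}$ and contains no local identity, the orthogonality relations $\tr(\sigi_{m}\sigi_{n})=2\delta_{mn}$ imply that only the correlation-tensor part of $\rho$ contributes, so that $\expval{\sigA_{m}\otimes\sigB_{n}}=t_{mn}$ and hence $\expval{\B\subtiny{0}{0}{\mathrm{CHSH}}}=\vec{a}\cdot t(\vec{b}-\vec{b}\pr)+\vec{a}\pr\cdot t(\vec{b}+\vec{b}\pr)$, with the shorthand $\vec{x}\cdot t\,\vec{y}:=\sum_{mn}x_{m}t_{mn}y_{n}$.

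The next step is a change of variables that decouples the Bob-side vectors. Setting $\vec{c}_{\pm}:=\vec{b}\pm\vec{b}\pr$ and using that $\vec{b},\vec{b}\pr$ are unit vectors, one finds $\vec{c}_{+}\!\cdot\vec{c}_{-}=|\vec{b}|^{2}-|\vec{b}\pr|^{2}=0$ and $|\vec{c}_{+}|^{2}+|\vec{c}_{-}|^{2}=4$, so that $\vec{c}_{-}=2\cos\theta\,\hat{c}$ and $\vec{c}_{+}=2\sin\theta\,\hat{c}\pr$ for some $\theta\in[0,\pi/2]$ and orthonormal $\hat{c}\perp\hat{c}\pr$; conversely every orthonormal pair arises this way from suitable unit vectors $\vec{b},\vec{b}\pr$. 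The expectation value becomes $2\bigl(\cos\theta\,\vec{a}\cdot t\hat{c}+\sin\theta\,\vec{a}\pr\cdot t\hat{c}\pr\bigr)$. Maximizing first over the unit vectors $\vec{a},\vec{a}\pr$ by Cauchy--Schwarz gives $2\bigl(\cos\theta\,|t\hat{c}|+\sin\theta\,|t\hat{c}\pr|\bigr)$, and a second application of Cauchy--Schwarz in $(\cos\theta,\sin\theta)$ yields $2\sqrt{|t\hat{c}|^{2}+|t\hat{c}\pr|^{2}}$.

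The final and decisive step is the optimization over the orthonormal pair $(\hat{c},\hat{c}\pr)$. Writing $|t\hat{c}|^{2}=\hat{c}^{T}t^{T}t\,\hat{c}=\hat{c}^{T}M_{\rho}\hat{c}$ and likewise for $\hat{c}\pr$, we must maximize $\hat{c}^{T}M_{\rho}\hat{c}+(\hat{c}\pr)^{T}M_{\rho}\hat{c}\pr$ over all orthonormal pairs in $\mathbb{R}^{3}$. Because $M_{\rho}$ is real, symmetric, and positive semi-definite, this is exactly Ky Fan's maximum principle: diagonalizing $M_{\rho}$ in an eigenbasis with eigenvalues $\mu_{1}\geq\mu_{2}\geq\mu_{3}\geq0$ and expanding $\hat{c}^{T}M_{\rho}\hat{c}=\sum_{i}\mu_{i}(\hat{c}\cdot e_{i})^{2}$, one uses that the coefficients $(\hat{c}\cdot e_{i})^{2}+(\hat{c}\pr\cdot e_{i})^{2}$ lie in $[0,1]$ and sum to $2$, so the maximum is $\mu_{1}+\mu_{2}$, attained for $\hat{c}=e_{1}$, $\hat{c}\pr=e_{2}$. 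Collecting the steps gives $\overline{\B}\subtiny{0}{0}{\mathrm{CHSH}}\suptiny{0}{0}{\,\mathrm{max}}=2\sqrt{\mu_{1}+\mu_{2}}$; since every intermediate maximum was attained, this value is realized by an explicit choice of $\vec{a},\vec{b},\vec{a}\pr,\vec{b}\pr$, so whenever it exceeds $2$ the violation~(\ref{eq:CHSH-inequality-density-matrix-Bellstate}) genuinely occurs and $\rho$ is nonlocal by definition.

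I expect the main obstacle to be the bookkeeping of the orthogonal decomposition $\vec{c}_{\pm}=\vec{b}\pm\vec{b}\pr$ together with the verification that the chain of maximizations is \emph{simultaneously} achievable, so that $2\sqrt{\mu_{1}+\mu_{2}}$ is the true maximum rather than merely an upper bound; at the conceptual level, the only nontrivial ingredient is Ky Fan's principle for a sum of quadratic forms over orthonormal frames, while the remaining estimates are routine applications of Cauchy--Schwarz.
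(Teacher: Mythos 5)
Your proof is correct. The paper itself states Theorem~\ref{theorem:CHSH criterion} without proof, citing Ref.~\cite{HorodeckiRPM1995}, and your argument is a faithful reconstruction of that standard proof: reduction to the correlation tensor, the orthogonal decomposition $\vec{b}\pm\vec{b}\pr=2\cos\theta\,\hat{c},\,2\sin\theta\,\hat{c}\pr$, two applications of Cauchy--Schwarz, and the Ky Fan step giving $\mu_{1}+\mu_{2}$, with all maxima simultaneously attained (the only degenerate case, $\mu_{1}=0$, i.e.\ $t=0$, is trivial).
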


Using the CHSH operator criterion, it is straightforward to verify that there are quantum states that are entangled, but nonetheless local in the sense of the CHSH inequality. This is best exemplified by a certain family of bipartite mixed states, the so-called \emph{Werner states}~\cite{Werner1989}, given by
\begin{align}
    \rho_{\mathrm{W}}   &=\,\alpha\,\rho^{-}\,+\,\tfrac{1}{4}(1-\alpha)\mathds{1}\subtiny{0}{0}{4}\,=\,
    \tfrac{1}{4}\bigl(\mathds{1}\subtiny{0}{0}{2}\otimes\mathds{1}\subtiny{0}{0}{2}\,-\,\alpha\,
    \vec{\sigma}\otimes\vec{\sigma}\bigr).
    \label{eq:Werner states}
\end{align}
For the parameter range $0\leq\alpha\leq1$, the state $\rho_{\mathrm{W}}(\alpha)$ can be viewed as an incoherent mixture of the maximally entangled Bell state $\ket{\Psi^{-}}$ with probability $\alpha$ on one hand, and the maximally mixed state $\tfrac{1}{4}\mathds{1}\subtiny{0}{0}{4}$ with probability $(1-\alpha)$ on the other. However, $\rho_{\mathrm{W}}(\alpha)$ represents a valid density operator also for the range $-\tfrac{1}{3}\leq\alpha\leq0$. Geometrically this can be understood as $\alpha$ parameterizing a straight line in Fig.~\ref{fig:tetrahedron of physical states}, that connects the corner representing $\ket{\Psi^{-}}$ (for $\alpha=1$) with $\rho_{\mathrm{mix}}=\tfrac{1}{4}\mathds{1}\subtiny{0}{0}{4}$ (for $\alpha=0$) at the origin, but continues onward until it intersects the opposite face of the double pyramid for $\alpha=-\tfrac{1}{3}$. As Werner discovered~\cite{Werner1989}, the state $\rho_{\mathrm{W}}(\alpha)$ is entangled for half its parameter range, that is, for $\tfrac{1}{3}<\alpha\leq1$, the partial transpose of $\rho_{\mathrm{W}}(\alpha)$ has a negative eigenvalue. However, the correlation tensor for this state is found to be $t_{\alpha}=-\alpha\mathds{1}\subtiny{0}{0}{3}$ and the CHSH operator criterion (Theorem~\ref{theorem:CHSH criterion}) hence informs us that $\overline{\B}\subtiny{0}{0}{\mathrm{CHSH}}\suptiny{0}{0}{\,\mathrm{max}}=2\sqrt{2\alpha^{2}}$. Consequently, the Werner state is nonlocal for $\alpha>\tfrac{1}{\sqrt{2}}$. This means that, in the range $\tfrac{1}{3}<\alpha\leq\tfrac{1}{\sqrt{2}}$ the states in Eq.~(\ref{eq:Werner states}) are entangled but nevertheless cannot violate the CHSH inequality.

Interestingly, there exist other Bell inequalities that are more efficient than the CHSH inequality in the sense that one may find states which violate the former, but not the latter. For instance, in Ref.~\cite{Vertesi2008}, a Bell-type inequality was introduced for which the Werner states show nonlocality already when $\alpha>0.7056$, which is slightly smaller than $\tfrac{1}{\sqrt{2}}\approx0.7071$. At the same time, recent improvement~\cite{HirschQuintinoVertesiNavascuesBrunner2016} of a known bound~\cite{AcinGisinToner2006} has revealed that Bell inequalities based on projective measurements cannot be violated by Werner states with $\alpha\leq0.682$, leaving only a small window of uncertainty. By employing general positive-operator-valued measurements (POVMs), one may in principle even go beyond the results for projective measurements and the Werner states may be nonlocal also for values of $\alpha$ below $0.682$. Bounds on the region of nonlocality have also been obtained in this case. In Ref.~\cite{HirschQuintinoVertesiNavascuesBrunner2016} it was shown that the correlations of Werner states with $\alpha<0.4547$ can be explained by local hidden-variable models for any measurement (improving on the previously known bounds $0.416$~\cite{Barrett2002} and $0.4519$~\cite{OszmaniecGueriniWittekAcin2016}).\\

In general one may in fact even find states with positive partial transposition that can violate certain Bell inequalities~\cite{VertesiBrunner2014}. The relationship of nonlocality with the PPT criterion, bound entanglement~\cite{Duer2001,AugusiakHorodecki2006,VertesiBrunner2012}, or steering criteria~\cite{MoroderGittsovichHuberGuehne2014} is hence complicated. For example, there are states whose entanglement is bound (no pure entangled state may be distilled from any number of copies of the state), which may yet violate a Bell inequality. Conversely, there are states with non-positive partial transposition (NPT) that do not violate any Bell inequality. And while all entangled states with positive partial transpose are bound entangled, it is not known whether a non-positive partial transposition implies distillability. For the remainder of this paper we will therefore focus on nonlocality in the sense of the violation of the CHSH inequality.\\

To incorporate this notion of nonlocality into our geometric picture, one can systematically apply the CHSH operator criterion to all Weyl states, noting that all locally maximally mixed states for which $\max_{i\neq j}\bigl[\tilde{t}_{i}^{\nr2}+\tilde{t}_{j}^{\nr2}\bigr]>2$ are nonlocal. The resulting region of nonlocality is illustrated in Fig.~\ref{fig:tetrahedron of physical states} where it is situated in the four corners of the tetrahedron outside the dark-yellow "parachutes". The region of local states can be found within these parachutes and contains all separable but also a number of (mixed) entangled states~\cite{ThirringBertlmannKoehlerNarnhofer2011, SpenglerHuberHiesmayr2011}.

\subsection{Hidden Nonlocality}\label{sec:hidden nonlocality}

Since, as Werner demonstrated~\cite{Werner1989}, certain entangled mixed states may satisfy all possible Bell inequalities, locality is not a sufficient criterion for separability. At this point it is important to note that the definition of nonlocality that we have used here is not the only one possible. Indeed, we call states nonlocal only if they can be directly used to violate a Bell inequality. However, as shown by Gisin~\cite{Gisin1996}, for some initially local quantum states the entanglement may be amplified by local filtering operations to allow for the violation of a Bell inequality. In this way the nonlocal character of the quantum system can be revealed (see also Ref.~\cite{Popescu1995} in this connection).

To understand this phenomenon, we consider a family of quantum states that arise as mixtures of pure (entangled) states $\rho_{\theta}=\ket{\psi_{\theta}}\!\!\bra{\psi_{\theta}}$, where
\begin{align}
    \ket{\psi_{\theta}} &=\,\sin(\theta)\,\ket{01}\,+\,\cos(\theta)\,\ket{10}\,,
    \label{eq:psi theta}
\end{align}
for $0<\theta<\tfrac{\pi}{2}$, with the mixed state given by
\begin{align}
    \rho_{\mathrm{top}}    &=\,\tfrac{1}{2}\bigl(\ket{00}\!\!\bra{00}+\ket{11}\!\!\bra{11}\bigr)\,=\,
    \tfrac{1}{4}\bigl(\mathds{1}\subtiny{0}{0}{2}\otimes\mathds{1}\subtiny{0}{0}{2}+
    \sigma_{z}\otimes\sigma_{z}\bigr).\label{eq:rho top}
\end{align}
The \emph{Gisin states}~\cite{Gisin1996} $\rho\subtiny{0}{0}{\mathrm{G}}$ are hence given by
\begin{align}
    &\rho\subtiny{0}{0}{\mathrm{G}}(\lambda,\theta)   \,=\,\lambda\,\rho_{\theta}\,+\,(1-\lambda)\,\rho_{\mathrm{top}}\,
    \label{eq:Gisin state}\\[1mm]
    &\ =\,\tfrac{1}{4}\bigl[\mathds{1}\subtiny{0}{0}{2}\otimes\mathds{1}\subtiny{0}{0}{2}
    -\lambda\cos(2\theta)\bigl(\sigma_{z}\otimes\mathds{1}\subtiny{0}{0}{2}-\mathds{1}\subtiny{0}{0}{2}\otimes\sigma_{z}\bigr)\nonumber\\[0.5mm]
    &\ \ +\lambda\sin(2\theta)\bigl(\sigma_{x}\otimes\sigma_{x}+\sigma_{y}\otimes\sigma_{y}\bigr)+(1-2\lambda)\,\sigma_{z}\otimes\sigma_{z}\bigr],
    \nonumber
\end{align}
for real probability weights $0\leq\lambda\leq1$. Note that the Gisin states are in general not locally maximally mixed, i.e., the local Bloch vectors do not vanish for the whole parameter range. Only the subset for which $\theta=\pi/4$ can be represented in the tetrahedron of Weyl states as a line connecting the state $\rho_{\mathrm{top}}$ at the upper corner of the separable double pyramid with the maximally entangled state $\ket{\Psi^{+}}$, as shown in Fig.~\ref{fig:tetrahedron of physical states}.

With the help of the PPT criterion one immediately finds that the Gisin state is entangled if and only if $1-\lambda<\lambda\nr\sin(2\theta)$. We can furthermore quantify the entanglement of $\rho\subtiny{0}{0}{\mathrm{G}}$ using an entanglement monotone called concurrence~\cite{Hill-Wootters1997,Wootters1998,Wootters2001}. For an arbitrary two-qubit density operator~$\rho$, the concurrence $C[\rho]$ is given by
\begin{align}
    C[\rho] &=\,\max\{0,\sqrt{\lambda_{1}}-\sqrt{\lambda_{2}}-\sqrt{\lambda_{3}}-\sqrt{\lambda_{4}}\}\,,
    \label{eq:concurrence}
\end{align}
where the $\lambda_{i}$ $(i=1,2,3,4)$ are the (nonnegative) eigenvalues of $\rho\,\sigma_{y}\hspace*{-1pt}\otimes\hspace*{-1pt}\sigma_{y}\rho^{*}\sigma_{y}\hspace*{-1pt}\otimes\hspace*{-1pt}\sigma_{y}$ in decreasing order ($\lambda_{1}\geq\lambda_{2}\geq\lambda_{3}\geq\lambda_{4}$), and $\rho^{*}$ is the complex conjugate of $\rho$ with respect to the computational basis. For the Gisin state, a simple calculation reveals that
\begin{align}
    C[\rho\subtiny{0}{0}{\mathrm{G}}]    &=\,\max\{0,\lambda\nr\sin(2\theta)+\lambda-1\}\,,
    \label{eq:Gisin state concurrence}
\end{align}
which is illustrated in Fig.~\ref{fig:Gisin state} for the allowed range of $\lambda$ and $\theta$. In contrast, we can determine the parameter range for which $\rho\subtiny{0}{0}{\mathrm{G}}(\lambda,\theta)$ is nonlocal using the CHSH operator criterion of Theorem~\ref{theorem:CHSH criterion}. Reading off the matrix elements of the correlation tensor from the Bloch decomposition \phantom{in}

\vspace*{-5mm}
\begin{figure}[hb!]
	\centering
    %%%trim={<left> <lower> <right> <upper>}
	\includegraphics[width=0.47\textwidth,trim={0cm 0cm 0cm 0cm},clip]{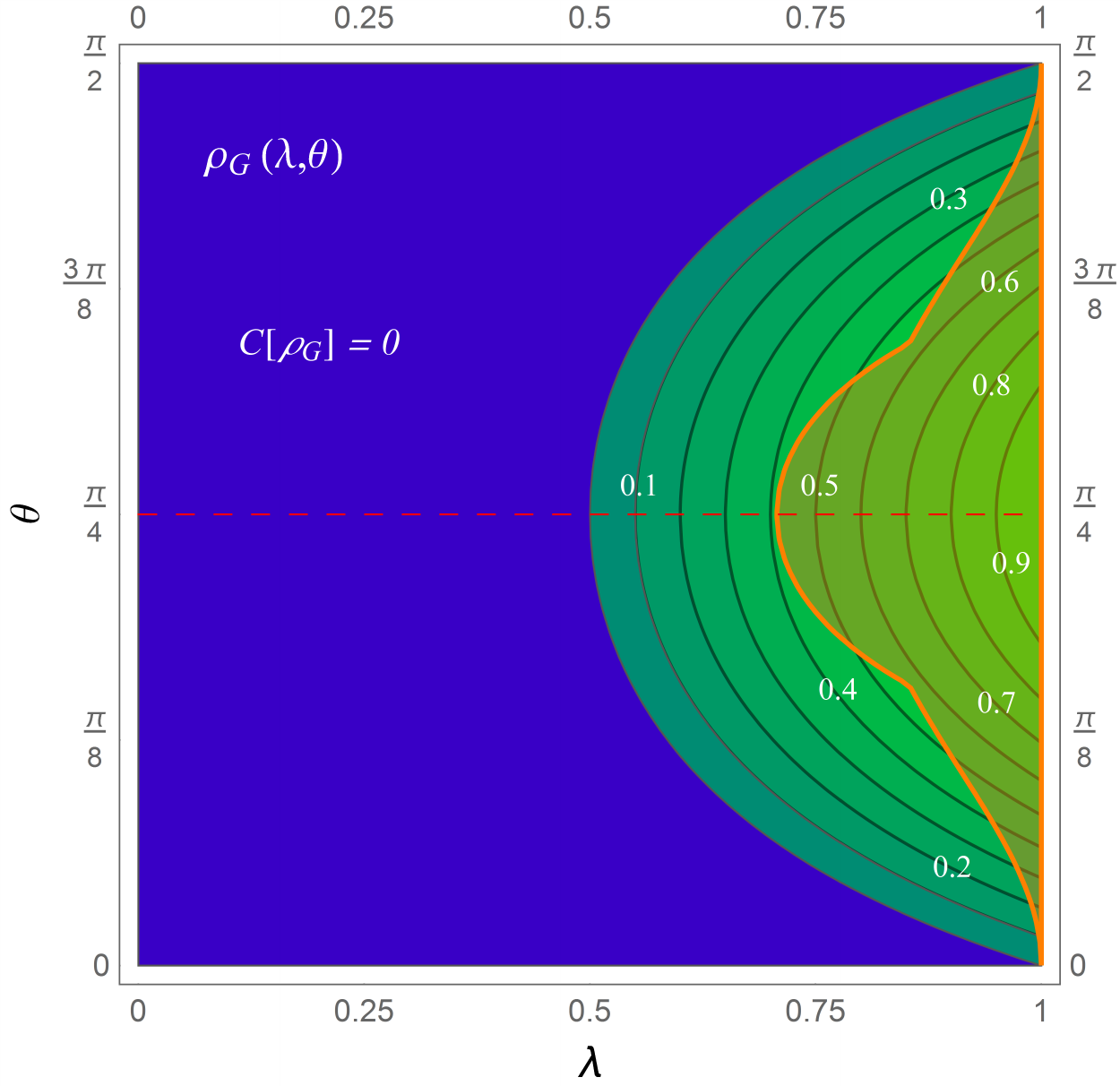}
\vspace*{-3mm}
%Tetraeder04_corr_2.pdf}
	\caption{\textbf{Gisin states}. The parameter regions of entanglement and nonlocality are shown for the family of Gisin states from Eq.~(\ref{eq:Gisin state}). Parameters $(\lambda,\theta)$ lying in the blue region on the left-hand side describe separable states. For the remaining entangled region (green) on the right-hand side the contour lines of the concurrence $C[\rho\subtiny{0}{0}{\mathrm{G}}]$ from Eq.~(\ref{eq:Gisin state concurrence}) are drawn for values $0.1$ to $0.9$ in steps of $0.1$. The orange lines on the right-hand side delimit the region of nonlocal Gisin states in the sense that $\overline{\B}\subtiny{0}{0}{\mathrm{CHSH}}\suptiny{0}{0}{\,\mathrm{max}}(\rho\subtiny{0}{0}{\mathrm{G}})>2$. The horizontal dashed red line indicates those Gisin states that are also Weyl states, corresponding to the dashed red line in Fig.~\ref{fig:tetrahedron of physical states}.}
	\label{fig:Gisin state}
\end{figure}

\clearpage
\newpage
\noindent
in Eq.~(\ref{eq:Gisin state}), one finds the maximally possible expectation value of the Bell-CHSH operator to be
\begin{align}
    \overline{\B}\subtiny{0}{0}{\mathrm{CHSH}}\suptiny{0}{0}{\,\mathrm{max}}(\rho\subtiny{0}{0}{\mathrm{G}})  &=2
    \max\Bigl\{\!\sqrt{\!\lambda^{2}\sin^{2}\!(2\theta)\!+\!(1\!-\!2\lambda)^{2}},\sqrt{2}\lambda\nr\sin(2\theta)\!\Bigr\}.
    \label{eq:Gisin state nonlocality}
\end{align}
The parameter region for which the Gisin states are nonlocal is indicated in Fig.~\ref{fig:Gisin state}. Similar to the Weyl states in Fig.~\ref{fig:tetrahedron of physical states}, some of the Gisin states may be local although being more entangled (as measured by the concurrence) than some of the nonlocal Gisin states.

However, the most interesting feature of the Gisin states is revealed by applying a local filtering procedure. That is, suppose that after sharing the state $\rho\subtiny{0}{0}{\mathrm{G}}(\lambda,\theta)$ for some $\theta$ between $0$ and $\pi/4$, Alice and Bob locally amplify their qubit states $\ket{0}\subtiny{0}{0}{A}$ and $\ket{1}\subtiny{0}{0}{B}$, respectively. Since in that case $\sin(\theta)<\cos(\theta)$, this increases the component of $\ket{01}$ with respect to that of $\ket{10}$ in $\ket{\psi_{\theta}}$, which effectively moves the state closer to the maximally entangled state $\ket{\Psi^{+}}$. Likewise, if $\pi/4<\theta<\pi/2$, amplifying $\ket{1}\subtiny{0}{0}{A}$ and $\ket{0}\subtiny{0}{0}{B}$, respectively, will have the same effect. Mathematically, these filtering operations are represented by a family of local, completely positive, and trace-nonincreasing maps $\mathcal{F}_{\theta}$, parameterized by $\theta$, and given by
\begin{align}
      \mathcal{F}_{\theta}:&\ \rho\,\mapsto\,\mathcal{F}_{\theta}(\rho)\,=\,F_{\theta}\,\rho\,F_{\theta}^{\dagger}\,.
\end{align}
Here, we choose Kraus operators satisfying $F_{\theta}^{\dagger}F_{\theta}\leq\mathds{1}$ which are given by
\begin{align}
    F_{\theta}    &=\,
    \begin{cases}
        \ F_{0}(\theta)\otimes F_{1}(\theta)  &   \mbox{if}\ \ \nr0<\theta\leq\tfrac{\pi}{4}\\
        F_{1}^{-1}(\theta)\otimes F_{0}^{-1}(\theta)  &   \mbox{if}\ \ \nl\nl\tfrac{\pi}{4}<\theta<\tfrac{\pi}{2}
    \end{cases},
\end{align}
where the local operations are
\begin{align}
    F_{0}(\theta)   &=\begin{pmatrix} \,\,1 & 0 \\[0.5mm] \,\,0 & \sqrt{\tan(\theta)} \end{pmatrix},
    \ \mbox{and}\,
    F_{1}(\theta)\,=\begin{pmatrix} \sqrt{\tan(\theta)} & 0 \\[0.5mm] 0\,\, & 1\,\, \end{pmatrix}.
    \nonumber
\end{align}
The probability for successful filtering can be computed as
\begin{align}
    \tr\bigl(\mathcal{F}_{\theta}(\rho)\bigr)   &=\,
    \begin{cases}
        \bigl(\lambda\nr\sin(2\theta)+1-\lambda\bigr)\cot(\theta)  &   \mbox{if}\ \ \nr0<\theta<\tfrac{\pi}{4}\\
        \bigl(\lambda\nr\sin(2\theta)+1-\lambda\bigr)\tan(\theta)  &   \mbox{if}\ \ \tfrac{\pi}{4}<\theta<\tfrac{\pi}{2}
    \end{cases}.
    \nonumber
\end{align}
With this, we get the normalized quantum state after the filtering procedure, i.e.,
\begin{align}
    &\rho\subtiny{0}{0}{\mathcal{F}}(\lambda,\theta)    \,=\,\frac{\mathcal{F}_{\theta}(\rho)}{\tr\bigl(\mathcal{F}_{\theta}(\rho)\bigr)}
    \,=\,\frac{\lambda\nr\sin(2\theta)\,\rho^{+}\,+\,(1-\lambda)\,\rho_{\mathrm{top}}}{1-\lambda+\lambda\nr\sin(2\theta)}\nonumber\\[1.5mm]
    &=\,\frac{1}{4}\bigl[\mathds{1}\subtiny{0}{0}{2}\otimes\mathds{1}\subtiny{0}{0}{2}
    +\frac{\lambda\nr\sin(2\theta)}{1-\lambda+\lambda\nr\sin(2\theta)}
    \bigl(\sigma_{x}\otimes\sigma_{x}+\sigma_{y}\otimes\sigma_{y}\bigr)\nonumber\\[0.5mm]
    &\ +\frac{1-\lambda-\lambda\nr\sin(2\theta)}{1-\lambda+\lambda\nr\sin(2\theta)}\,\sigma_{z}\otimes\sigma_{z}\bigr].
    \label{eq:filtered Gisin state}
\end{align}
The filtered Gisin state now fully lies within the set of Weyl states. In fact, the set of filtered Gisin states coincides with the set of unfiltered Gisin states for $\theta=\pi/4$, represented by the dashed red lines from $\rho_{\mathrm{top}}$ to $\rho^{+}=\ket{\Psi^{+}}\!\!\bra{\Psi^{+}}$ in Fig.~\ref{fig:tetrahedron of physical states} and Fig.~\ref{fig:Gisin state}.

Moreover, we can easily evaluate the concurrence of the filtered Gisin states, obtaining
\begin{align}
    C[\rho\subtiny{0}{0}{\mathcal{F}}]    &=\,\max\{0,\frac{\lambda\nr\sin(2\theta)-(1-\lambda)}{\lambda\nr\sin(2\theta)+1-\lambda}\}\,,
    \label{eq:filtered Gisin state concurrence}
\end{align}
which is illustrated in Fig.~\ref{fig:filtered Gisin state}. As for the unfiltered state, we see that $\rho\subtiny{0}{0}{\mathcal{F}}$ is entangled if (and only if) $\lambda\nr\sin(2\theta)>(1-\lambda)$. This means, the filtering is not able to entangle initially separable states. However, since the denominator satisfies $\lambda\nr\sin(2\theta)+1-\lambda\leq1$, all the already entangled states can be seen to become more entangled. Although the filtering operation is local, this is possible since the part of the initial quantum state that does not pass the filters is disregarded. If we were to complete the (trace-nonincreasing) quantum operation $\mathcal{F}_{\theta}$ to a (trace-preserving) quantum channel $\overline{\mathcal{F}}_{\theta}$ with Kraus operators $F_{\theta}$ and $\overline{F}_{\theta}=\bigl(\mathds{1}-F_{\theta}^{\dagger}F_{\theta}\bigr)\suptiny{0}{0}{1/2}$, then the entanglement would not increase.

Having noted that this amplification of the entanglement leaves separable states separable, it is now interesting to consider the effect on nonlocality. We calculate
\vspace*{-2mm}
\begin{figure}[hb!]
	\centering
    %%%trim={<left> <lower> <right> <upper>}
	\includegraphics[width=0.48\textwidth,trim={0cm 0cm 0cm 0cm},clip]{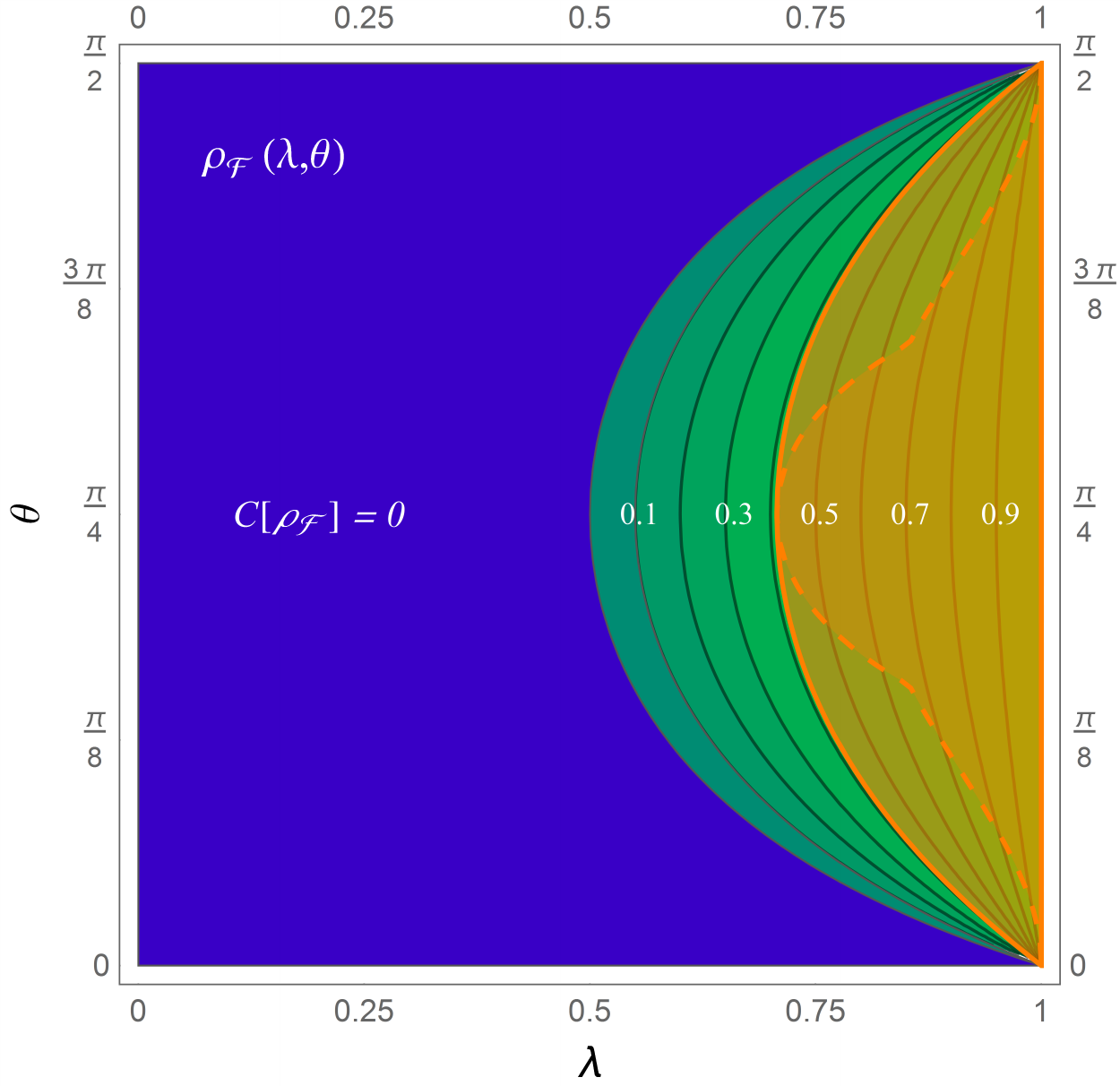}
%Tetraeder04_corr_2.pdf}
	\caption{\textbf{Filtered Gisin states}. The parameter regions of entanglement and nonlocality are shown for the family of filtered Gisin states from Eq.~(\ref{eq:filtered Gisin state}). Parameters $(\lambda,\theta)$ lying in the blue region on the left-hand side describe separable states. For the remaining entangled region (green) on the right-hand side the contour lines of the concurrence $C[\rho\subtiny{0}{0}{\mathcal{F}}]$ from Eq.~(\ref{eq:filtered Gisin state concurrence}) are drawn for values $0.1$ to $0.9$ in steps of $0.1$. The orange area on the right-hand side indicates the nonlocal filtered Gisin states in the sense that $\overline{\B}\subtiny{0}{0}{\mathrm{CHSH}}\suptiny{0}{0}{\,\mathrm{max}}(\rho\subtiny{0}{0}{\mathcal{F}})>2$. The dashed orange line delimits the parameter region for which the unfiltered Gisin states $\rho\subtiny{0}{0}{\mathrm{G}}(\lambda,\theta)$ are nonlocal, see Fig.~\ref{fig:Gisin state}.}
	\label{fig:filtered Gisin state}
\end{figure}

\clearpage
\newpage
\noindent
the maximally possible expectation value of the CHSH inequality from Theorem~\ref{theorem:CHSH criterion}, which yields
\begin{align}
    &\overline{\B}\subtiny{0}{0}{\mathrm{CHSH}}\suptiny{0}{0}{\,\mathrm{max}}(\rho\subtiny{0}{0}{\mathcal{F}})  \,=\,
    \frac{2}{\lambda\nr\sin(2\theta)+1-\lambda}\times
    \label{eq:filtered Gisin state nonlocality}\\
    &\times\max\Bigl\{\!\sqrt{\lambda^{2}\sin^{2}\!(2\theta)+\bigl(1-\lambda-\lambda\nr\sin(2\theta)\bigr)^{2}},
    \sqrt{2}\lambda\nr\sin(2\theta)\!\Bigr\}.
    \nonumber
\end{align}
Focussing on the parameter region where $\rho\subtiny{0}{0}{\mathcal{F}}$ is entangled, i.e., for $\lambda>\bigl(1+\nr\sin(2\theta)\bigl)^{-1}$, we find the condition for the filtered Gisin state to be nonlocal as
\begin{align}
    \overline{\B}\subtiny{0}{0}{\mathrm{CHSH}}\suptiny{0}{0}{\,\mathrm{max}}(\rho\subtiny{0}{0}{\mathcal{F}})\,>\,2\ \Leftrightarrow\
    \lambda\,>\,\frac{1}{(\sqrt{2}-1)\sin(2\theta)\,+\,1}
    %\lambda\nr\sin(2\theta)\,>\,\frac{1-\lambda}{\sqrt{2}-1}
    \,.
\end{align}
As illustrated in Fig.~\ref{fig:filtered Gisin state}, the nonlocal parameter region for $\rho\subtiny{0}{0}{\mathcal{F}}(\lambda,\theta)$ includes the entire region of nonlocality of the unfiltered state, but is also strictly larger. Some previously local (entangled) states become more strongly entangled and even nonlocal due to the filtering. The amplification of entanglement hence reveals the hidden nonlocality of some of the Gisin states, while others remain local. Although this separation may attributed to the choice of filtering operation, it should be remarked here that not every entangled state can become nonlocal under local filtering operations~\cite{HirschQuintinoBowlesVertesiBrunner2016}.\\

Further note that, in contrast to Gisin's \emph{nonunitary} but \emph{local} filtering operations, one may instead use a \emph{unitary} but \emph{nonlocal} operation to increase the entanglement of the Gisin state. This simply corresponds to another choice of factorizing the algebra of a density matrix~\cite{ThirringBertlmannKoehlerNarnhofer2011}. In this case, the mixedness of the state would remain unchanged. For instance, consider the unitary transformation given by
\begin{align}
    U_{\theta}  &=\,\tfrac{1}{\sqrt{2}}\bigl(f_{+}(\theta)\mathds{1}\subtiny{0}{0}{2}\otimes\mathds{1}\subtiny{0}{0}{2}
    -i\nr f_{-}(\theta)\sigma_{x}\nl\otimes\nl\sigma_{y}\bigr)\,,
\end{align}
where $f_{\pm}(\theta)=\cos(\theta)\pm\sin(\theta)$. Since $\rho_{\theta}$ from Eq.~(\ref{eq:psi theta}) is transformed to the maximally entangled state $\rho^{+}$, i.e., $U_{\theta}\rho_{\theta}U_{\theta}^{\dagger}=\rho^{+}$, and $\rho_{\mathrm{top}}=U_{\theta}\rho_{\mathrm{top}}U_{\theta}^{\dagger}$ is left invariant by the unitary transformation, the Gisin states become
\begin{align}
    \rho_{U}(\lambda)   &=\,U_{\theta}\rho_{\mathrm{G}}(\lambda,\theta)U_{\theta}^{\dagger}\,=\,
    \lambda\,\rho^{+}\,+\,(1-\lambda)\,\rho_{\mathrm{top}}\,.
    \label{eq:rotated Gisin states}
\end{align}
The unitarily transformed Gisin states are independent of $\theta$, and more specifically, $\rho_{U}(\lambda)=\rho_{\mathrm{G}}(\lambda,\pi/4)$. The unitary hence corresponds to vertically moving states in Fig.~\ref{fig:Gisin state} towards the dashed red line of $\theta=\pi/4$ while keeping $\lambda$ fixed. It can easily be seen that this allows for separable states to become entangled, and even nonlocal with respect to the new factorization.

\subsection{Classical and Quantum Entropy Measures}\label{sec:entropy measures}

Let us now turn to another major category of quantities used for the characterization of correlations. Many fundamental features of multi-party (quantum) systems can be captured by entropy, a key concept in both classical and quantum physics. In classical information theory, the basic quantity is the Shannon entropy. For a random variable $A$ whose possible  values $a$ are encountered with probability $p(a)$, the Shannon entropy is given by
\begin{align}
    H(A)    &=\,-\sum_{a}p(a)\log p(a)\,,
    \label{eq:Shannon entropy}
\end{align}
where the logarithm is understood to be to base~$2$. The Shannon entropy $H(A)$ represents the uncertainty for the occurrence of the values $a$ in the sense that it quantifies the amount of information (in bits) that is gained on average by sampling the random variable once. For a bipartite system with \emph{independent} random variables $A$ and $B$ with values $a$ and $b$, respectively, the joint probability distribution factorizes $p(a,b)=p(a)\nr p(b)$. In this case the joint entropy $H(A,B)$ is additive, i.e.,
\begin{align}
    H(A,B)  &:=\,-\sum_{a,b}\,p(a,b)\,\log\,p(a,b)\,=\,H(A)\,+\,H(B)\,,
    \nonumber
\end{align}
and any information gained about $b$ does not reveal any information about $a$, or the other way around. In general, however, the joint entropy is subadditive, that is, $H(A,B)\leq H(A)+H(B)$. The strict inequality holds when the random variables $A$ and $B$ are correlated, such that information about the occurrence of $b$ gives us information about the occurrence of $a$, and vice versa. For an arbitrary joint probability distribution $p(a,b)$, the entropy can hence be written as
\begin{align}
    H(A,B)  &=H(A|B)+H(B)=H(B|A)+H(A),
    \label{classical-joint-entropy}
\end{align}
where $H(A|B)$ is the (classical) conditional entropy defined as
\begin{align}
    H(A|B)  &=\; - \sum_{a,b} \,p(a,b) \,\log \,p(a|b) \,
    \label{eq:classical cond entropy}\\
    &=\,H(A,B)\,-\,H(B)\,,\nonumber
\end{align}
where $p(a|b) =p(a,b)/p(b)$ is the conditional probability, i.e., the probability of the occurrence of $a$ conditional on the occurrence of $b\,$. In other words, the conditional entropy of Eq.~(\ref{eq:classical cond entropy}) characterizes the uncertainty about the value $a$ when the value $b$ is already known. From the above definitions it immediately follows that
\begin{align}
    0\,\leq\, H(A|B)\,\leq\, H(A)\,.
    \label{eq:cond entropy bounds}
\end{align}
If one wishes to define a measure for the correlations between $A$ and $B$, the (classical) \emph{mutual information} $H(A\!:\!B)$ readily presents itself. It can be defined as the amount of information that is encoded in the joint distribution $p(a,b)$ but which is not contained in the local distributions $p(a)$ and $p(b)$, i.e.,
\begin{align}
    H(A\!:\!B)  &:=\,H(A)\,+\,H(B)\,-\,H(A,B)\,.
    \label{eq:classcial mutual information I}
\end{align}
Another possible definition for the mutual information is as the difference between the local uncertainty $H(A)$ and the conditional uncertainty $H(A|B)$, that is,
\begin{align}
    H(A\!:\!B)  &:=\,H(A)\,-\,H(A|B)\,.
    \label{eq:classcial mutual information J}
\end{align}
As can be easily seen from Eq.~(\ref{classical-joint-entropy}), these definitions for the mutual information are equivalent. Moreover, since $H(A,B)\leq H(A)+H(B)$, and the conditional entropies are nonnegative, $H(A|B),H(B|A)\geq0$, the two definitions in Eqs.~(\ref{eq:classcial mutual information I}) and~(\ref{eq:classcial mutual information J}) further imply the bound
\begin{align}
    0\,\leq\,H(A\!:\!B)\,\leq\,\min\{H(A),H(B)\}.
    \label{eq:bounds on classical mututal inf}
\end{align}
However, when we extend these entropic measures to quantum systems, we will encounter some interesting differences to the classical case, especially when entangled systems are considered.

The quantum analogue to the classical entropy of Eq.~(\ref{eq:Shannon entropy}) is the \emph{von~Neumann entropy} $S(\rho)$, defined as the Shannon entropy of the spectrum of the density operator $\rho$ representing the quantum state, that is,
\begin{align}
    S(\rho) &:=\,-\tr\bigl(\rho\,\log\rho\bigr)\,=\,-\sum\limits_{n}p_{n}\log p_{n}\,,
    \label{eq:von Neumann entropy}
\end{align}
where $\rho=\sum_{n}p_{n}\ket{\psi_{n}}\!\!\bra{\psi_{n}}$ for some orthonormal basis $\{\ket{\psi_{n}}\}$. Similar to the Shannon entropy, the von~Neumann entropy represents the uncertainty \textemdash\ the lack of information \textemdash\ we have about the state represented by $\rho$. This definition naturally applies to bipartite systems with density operators $\rho\subtiny{0}{0}{AB}$, such that the joint entropy is
\begin{align}
    S(A,B)  &\equiv\,S(\rho\subtiny{0}{0}{AB})\,=\,-\,\tr\bigl(\rho\subtiny{0}{0}{AB}\log\rho\subtiny{0}{0}{AB}\bigr).
\end{align}
Since the von~Neumann entropy of pure states vanishes, one may quantify the entanglement of bipartite pure states $\ket{\psi}\subtiny{-1}{0}{AB}$ by the entropy of the reduced states, i.e., one can define the \emph{entropy of entanglement} $\mathcal{E}(\ket{\psi}\subtiny{-1}{0}{AB})$ as
\begin{align}
    \mathcal{E}(\ket{\psi}\subtiny{-1}{0}{AB})  &=\,S(A)\,=\,S(B)\,,
\end{align}
where $S(A)\equiv S(\rho\subA{A})$ and $S(B)\equiv S(\rho\subA{B})$. However, when generalizing this concept to mixed states, it becomes problematic to distinguish the contributions of the joint state entropy and entanglement to the entropy of the subsystems. This necessitates the introduction of a complicated optimization procedure when defining the so-called \emph{entanglement of formation} $\mathcal{E}_{\mathrm{oF}}$ of a mixed state as
\begin{align}
    \mathcal{E}_{\mathrm{oF}}(\rho)   &=\,\min_{\{(p_{n},\ket{\psi_{n}})\}}\sum\limits_{n}p_{n}\,\mathcal{E}(\ket{\psi_{n}})\,,
\end{align}
where the minimization is carried out over all pure-state ensembles realizing the density operator $\rho=\sum_{n}p_{n}\ket{\psi_{n}}\!\!\bra{\psi_{n}}$. It is not known how to practically carry out this optimization in general, but for some special cases, $\mathcal{E}_{\mathrm{oF}}(\rho)$ can be computed explicitly. Amongst these, the most prominent is the case of two qubits, where the entanglement of formation is found to be a monotonously increasing function of the concurrence~\cite{Wootters1998} of Eq.~(\ref{eq:concurrence}), i.e.,
\begin{align}
     \mathcal{E}_{\mathrm{oF}}(\rho)    &=\,h\bigl(\frac{1+\sqrt{1-C^{2}[\rho]}}{2}\bigr)\,,
\end{align}
where $h(p)=-p\log(p)-(1-p)\log(1-p)$ is the Shannon entropy of the Bernouli distribution $\{p,1-p\}$.

In contrast, the straightforward generalization of the mutual information from Eq.~(\ref{eq:classcial mutual information I}) to the quantum case, given by
\begin{align}
    S(A\!:\!B)  &:=\,S(A)\,+\,S(B)\,-\,S(A,B)\,,
    \label{eq:quantum mutual information I}
\end{align}
does not separate genuine quantum correlations (i.e., entanglement) from purely classical correlations. Instead, as emphasized by Cerf and Adami~\cite{Cerf-Adami-PRL1997}, the quantum mutual information $S(A\!:\!B)$ is a measure of the overall correlations. Moreover, $S(A\!:\!B)$ has an interesting interpretation in the context of quantum thermodynamics. The quantum mutual information can be shown to be proportional to the work cost of its creation from an initial thermal bath~\cite{BruschiPerarnauLlobetFriisHovhannisyanHuber2015}. That is, the maximal amount of correlation as measured by the mutual information that can be created between two initially thermal, noninteracting systems at temperature $T$ at the expense of the work $W$ is $S(A\!:\!B)=W/T$ (in units where $\hbar=k\subA{\mathrm{B}}=1$).

And while the quantum mutual information also remains positive, $S(A\!:\!B)\geq0$, just as the classical mutual information in Eq.~(\ref{eq:classcial mutual information I}), it can exceed the classical upper bound from Eq.~(\ref{eq:bounds on classical mututal inf}) by a factor of $2$ such that we have
\begin{align}
    0   &\leq\,S(A\!:\!B)  \,\leq 2\,\min\{S(A),S(B)\} \;.
    \label{quantum-info-bound}
\end{align}
The quantum information bound of Eq.~(\ref{quantum-info-bound}) follows directly from the definition in Eq.~(\ref{eq:quantum mutual information I}) using the Araki-Lieb inequality
\begin{align}
    \big| S(A) \,-\, S(B)\big| \;\leq\; S(A,B) \;.
\end{align}
As we shall see in the next section, when one also introduces the generalization of the conditional entropy to the quantum regime one encounters some more surprises.

\subsection{Conditional Entropy and Conditional Amplitude Operator}\label{sec:entropy}

A straightforward generalization\footnote{Note that other generalizations for the quantum conditional entropy are possible, for which the equality in Eq.~(\ref{eq:relation of quantum mutual inf and cond entr}) does not hold~\cite{Zurek2000,OlivierZurek2001,HendersonVedral2001}.} of the conditional entropy of Eq.~(\ref{eq:classical cond entropy}) to bipartite density operators~$\rho\subA{AB}$ on a joint Hilbert space is
\begin{align}
    S(A|B)  &:=\,S(A,B)\,-\,S(B)\,.
    \label{eq:quantum cond entropy}
\end{align}
With this definition, one recovers the same relation to the mutual information as in the classical case, i.e.,
\begin{align}
    S(A\!:\!B)    &=\,S(A)\,-\,S(A|B)
    \label{eq:relation of quantum mutual inf and cond entr}
\end{align}
as in Eq.~(\ref{eq:classcial mutual information J}), and the upper bound for the conditional entropy remains as in the classical case. That is, since $S(A,B)\leq S(A)+S(B)$, one finds $S(A|B)\leq S(A)$, in analogy to the right-hand side of Eq.~(\ref{eq:cond entropy bounds}).
However, the lower bound is altered. In the quantum case one can encounter \emph{negative conditional entropy}. For instance, when we consider a pure, maximally entangled state such as $\rho^{-}=\ket{\Psi^{-}}\!\!\bra{\Psi^{-}}$, the joint entropy vanishes, $S(\rho^{-})=0$, while the local entropy is maximal, $S\bigl(\tr\subA{A}(\rho^{-})\bigr)=\log(2)$, and the conditional entropy hence is $S(A|B)=-\log(2)<0$. In general, the conditional entropy is thus bounded by the marginal entropies, i.e.,
\begin{align}
    -\,S(B) &\leq\,S(A|B)\,\leq\,S(A)\,.
\end{align}
A physical interpretation for the negative quantum conditional entropy was given in the context of state merging protocols between two observers~\cite{Horodecki-Oppenheim-Winter}. There it was found that positive values of $S(A\!:\!B)$ quantify the partial information in qubits that need to be sent from $A$ to $B$, whereas a negative conditional entropy indicates that, in addition to successfully running the protocol, a surplus of qubits remains for potential future communication. Moreover, a classical analogue of negative partial information can also be given~\cite{Oppenheim-Spekkens-Winter-0511247v2}. Other physical interpretations of negative conditional entropy arise in quantum thermodynamics~\cite{DelRioAbergRennerDahlstenVedral2011}, and when considering measurements of quantum systems, where the negative conditional entropy quantifies the amount of information in the post-selected ensembles~\cite{Salek-Schubert-Wiesner-PRA2014}. These interesting interpretations motivate considering ``entropic Bell inequalities'' whose violation implies a negative conditional entropy~\cite{Cerf-Adami-PRA1997}.

Here, we want to better understand the relationship of negative conditional entropy and entanglement. In order to do so, let us first discuss a different way to extend the classical conditional entropy of Eq.~(\ref{eq:classical cond entropy}) to the quantum case. That is, we consider the \emph{conditional amplitude operator} $\rho\subA{A|B}$ proposed by Cerf and Adami~\cite{Cerf-Adami-PRL1997, Cerf-Adami-PRA1999}, which is given by
\begin{align}
    \rho\subA{A|B}  &:=\,\exp\bigl(\log\rho\subA{AB}\,-\,\log(\mathds{1}\subA{\!A}\!\otimes\rho\subA{B})\bigr)\,,
    \label{eq:conditional amplitude operator}
\end{align}
where the exponential map is understood to be to base $2$. The conditional amplitude operator is a positive semi-definite hermitian operator defined on the support of $\rho\subA{AB}$ that takes over the role of the classical conditional probability $p(a|b)$ in the sense that one can now define the conditional entropy as
\begin{align}
    S(A|B)  &=\,-\tr\bigl(\rho\subA{AB}\log\rho\subA{A|B}\bigr)
    \label{eq:cond entropy from CAR}
\end{align}
in analogy to Eq.~(\ref{eq:classical cond entropy}). To see that this definition is equivalent to Eq.~(\ref{eq:quantum cond entropy}), simply note that

\newpage
\vspace*{-10mm}
\begin{align}
    &\tr\bigl(\rho\subA{AB}\log(\mathds{1}\subA{\!A}\!\otimes\rho\subA{B})\bigr) \,=\,
    \tr\bigl(\rho\subA{AB}\mathds{1}\subA{\!A}\!\otimes\log\rho\subA{B}\bigr)\\[1mm]
    &\ \ \ =\,\tr\subA{A}\bigl(\rho\subA{A}\mathds{1}\subA{\!A}\bigr)\,\tr\subA{B}\bigl(\rho\subA{B}\log\rho\subA{B}\bigr)\,=\,-S(B)\,.\nonumber
\end{align}
Despite this close analogy between the conditional probability distribution and the conditional amplitude operator there are some fundamental differences. Whereas $p(a|b)$ is a probability distribution satisfying $0 \leq p(a|b) \leq 1$, its quantum analogue $\rho\subA{A|B}$ is not a density matrix in general. While $\rho\subA{A|B}$ is hermitian and positive semi-definite, it can have eigenvalues larger than one, and hence $\rho_{A|B} \nleq 1\,$. Ultimately, this is what can lead to the negativity of the conditional entropy. As we have seen, a state for which this occurs is the maximally entangled Bell state. Moreover, we can immediately note that the spectrum of $\rho\subA{A|B}$ (and thus the conditional entropy) is invariant under any local unitary transformation of the form $U\subA{A}\otimes U\subA{B}\,$, which also leaves entanglement unchanged. This already suggests that the spectrum of the Cerf-Adami operator $\rho\subA{A|B}$ is related to the separability of quantum states. Indeed, the following theorem due to Cerf and Adami~\cite{Cerf-Adami-PRA1999} can be formulated.

\vspace*{-1mm}
\begin{theorem}[Cerf-Adami Theorem]\ \\[0.8mm]
The operator
\vspace*{-3mm}
\begin{align}
    \sigma\subA{AB} &:=\,-\,\log\rho\subA{A|B}
    \,=\,\log(\mathds{1}\subA{\!A}\!\otimes\rho\subA{B})\,-\,\log\rho\subA{AB}
    \label{CA sigma operator}
\end{align}
is positive semi-definite if the bipartite quantum states characterized by $\rho\subA{AB}$ are separable.
\label{theorem:Cerf-Adami-separability-theorem}
\end{theorem}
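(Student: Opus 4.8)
The plan is to reduce the claim to the operator monotonicity of the logarithm. By Eq.~(\ref{CA sigma operator}), the assertion $\sigma\subA{AB}\geq0$ is, by definition, the operator inequality $\log(\mathds{1}\subA{A}\otimes\rho\subA{B})\geq\log\rho\subA{AB}$. Since the scalar function $x\mapsto\log x$ is operator monotone on $(0,\infty)$ (a standard fact about operator monotone functions), it suffices to prove the corresponding ``pre-logarithmic'' inequality
\begin{align}
    \mathds{1}\subA{A}\otimes\rho\subA{B}\;\geq\;\rho\subA{AB}
    \label{eq:reduction-type-inequality}
\end{align}
for every separable $\rho\subA{AB}$, and then apply $\log$ to both sides. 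It is worth stressing that this last step is \emph{not} a termwise manipulation: $\mathds{1}\subA{A}\otimes\rho\subA{B}$ and $\rho\subA{AB}$ do not commute in general, so passing from~(\ref{eq:reduction-type-inequality}) to the logarithmic inequality genuinely uses operator monotonicity rather than simultaneous diagonalisation.

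First I would establish~(\ref{eq:reduction-type-inequality}). Writing a separable state as in Eq.~(\ref{set-separable-states}), $\rho\subA{AB}=\sum_{n}p_{n}\,\rho\supA{A}_{n}\otimes\rho\supA{B}_{n}$, the marginal is $\rho\subA{B}=\tr\subA{A}(\rho\subA{AB})=\sum_{n}p_{n}\rho\supA{B}_{n}$, and therefore
\begin{align}
    \mathds{1}\subA{A}\otimes\rho\subA{B}-\rho\subA{AB}\;=\;\sum_{n}p_{n}\,\bigl(\mathds{1}\subA{A}-\rho\supA{A}_{n}\bigr)\otimes\rho\supA{B}_{n}\;\geq\;0,
    \nonumber
\end{align}
because each $\rho\supA{A}_{n}$ is a density operator, so $\mathds{1}\subA{A}-\rho\supA{A}_{n}\geq0$, each $\rho\supA{B}_{n}\geq0$, the tensor product of positive semi-definite operators is positive semi-definite, and $p_{n}\geq0$. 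This is the only point at which separability enters; it also indicates why the converse of the theorem must fail, since~(\ref{eq:reduction-type-inequality}) is a reduction-type criterion strictly weaker than separability, so positivity of $\sigma\subA{AB}$ can only be necessary, not sufficient, for separability — in accordance with the counterexample discussed later in the paper.

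The remaining task is to take the logarithm of~(\ref{eq:reduction-type-inequality}), and here lies the only point requiring genuine care: $\rho\subA{AB}$ need not have full rank, so the operator-monotonicity statement cannot be quoted verbatim on all of $\Ha$. I would work on $\mathrm{supp}(\rho\subA{AB})$ — the subspace on which the conditional amplitude operator is defined anyway — and note that $\mathrm{supp}(\rho\subA{AB})\subseteq\Ha\subA{A}\otimes\mathrm{supp}(\rho\subA{B})=\mathrm{supp}(\mathds{1}\subA{A}\otimes\rho\subA{B})$, so there $\mathds{1}\subA{A}\otimes\rho\subA{B}$ is strictly positive. A standard $\varepsilon$-regularisation — replacing $\rho\subA{AB}$ by the still separable, full-rank state $(1-\varepsilon)\rho\subA{AB}+\varepsilon\,\mathds{1}/(d\subA{A}d\subA{B})$, which again satisfies~(\ref{eq:reduction-type-inequality}) — lets operator monotonicity apply directly, giving $\sigma\subA{AB}^{(\varepsilon)}\geq0$; compressing to $\mathrm{supp}(\rho\subA{AB})$ and letting $\varepsilon\to0$ (using that the regularised operators are block diagonal with respect to that subspace) yields $\sigma\subA{AB}\geq0$. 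Finally, the base-$2$ logarithm appearing in Eqs.~(\ref{eq:conditional amplitude operator})--(\ref{CA sigma operator}) differs from the natural one only by the positive factor $1/\ln2$ and is thus irrelevant for the sign. In short, the conceptual content is the two-line inequality~(\ref{eq:reduction-type-inequality}) together with the invocation of operator monotonicity; the only delicate point is the bookkeeping with supports when $\rho\subA{AB}$ is rank-deficient.
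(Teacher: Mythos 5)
Your argument is correct, but note that the paper itself does not prove this statement: Theorem~\ref{theorem:Cerf-Adami-separability-theorem} is quoted from Ref.~\cite{Cerf-Adami-PRA1999} without proof. Your route --- the reduction-type inequality $\mathds{1}\subA{A}\otimes\rho\subA{B}\geq\rho\subA{AB}$ for separable states, followed by operator monotonicity of the logarithm, with the $\varepsilon$-regularisation to handle rank-deficient $\rho\subA{AB}$ --- is precisely the standard argument given in the original Cerf--Adami reference, so there is nothing to compare beyond noting that you have supplied the proof the paper omits.
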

\vspace*{-1mm}

Theorem~\ref{theorem:Cerf-Adami-separability-theorem} implies that any separable bipartite state satisfies the condition $\rho\subA{A|B}\leq\mathds{1}\,$. In turn, this means that the conditional entropy is non-negative, $S(A|B) \geq 0$, for any separable state. States with negative conditional entropy must hence necessarily be entangled. Here, it is important to note that the negativity of the conditional entropy implies that (some of) the eigenvalues of $\rho\subA{A|B}$ exceed the physical boundary of unity, but the converse is not true as we demonstrate by several examples in Sect.~\ref{sec:geometry-two-qubit-states}.

Moreover, the condition $\rho\subA{A|B}\leq1$ and the positivity of $S(A|B)$ are only necessary for the separability of the quantum states, but are in general not sufficient. As realized in Ref.~\cite{Cerf-Adami-PRA1999}, there exist entangled quantum states $\rho\subA{AB}$ for which the operator $\sigma\subA{AB}$ from Eq.~(\ref{CA sigma operator}) is positive semi-definite, $\sigma\subA{AB}\geq0$, and hence $\rho\subA{A|B}\leq 1$ and $S(A|B)\geq0$. Such cases are of interest to the present work when it comes to detecting entanglement and nonlocality. The results of our investigation in $2 \times 2$ dimensions are presented in Sect.~\ref{sec:geometry-two-qubit-states}. Before we finally turn to these results, also note that an operator analogous to that of Eq.~(\ref{eq:conditional amplitude operator}) can be defined for the mutual information~\cite{Cerf-Adami-PRL1997}. The mutual amplitude operator $\rho\subA{A\!\nl:\nl\!B}$ defined as
\begin{align}
    \rho\subA{A\!\nl:\nl\!B}  &:=\,\exp\bigl(\log(\rho\subA{A}\otimes\rho\subA{B})\,-\,\log\rho\subA{AB}\bigr)\,,
    \label{mutual-amplitude-operator}
\end{align}
gives rise to the mutual information of Eq.~(\ref{eq:quantum mutual information I}) via
\vspace*{-2mm}
\begin{align}
    S(A\!\nl:\nl\!B)  &=\,-\tr\bigl(\rho\subA{AB}\log\rho\subA{A\!\nl:\nl\!B}\bigr)\,.
    \label{eq:mututal entropy from MAR}
\end{align}

\section{Results}\label{sec:results}

\subsection{Geometry of Two Qubit States with Negative Conditional Entropy}\label{sec:geometry-two-qubit-states}

We now wish to incorporate the negativity of the conditional entropy and the conditional amplitude operator bound into the geometric picture of two-qubit entanglement. To this end, we first consider again the Werner states $\rho_{\mathrm{W}}(\alpha)$ from Eq.~(\ref{eq:Werner states}). As we have previously argued, these locally maximally mixed states form a line in the tetrahedron of Weyl states, reaching from the maximally entangled state $\ket{\Psi^{-}}$ at $\alpha=1$, through the maximally mixed state at the origin for $\alpha=0$ to the opposite side of the separable double pyramid until $\alpha=-1/3$, see Fig.~\ref{fig:tetrahedron of physical states}. The Werner states are entangled for $\alpha>1/3$ and violate the CHSH inequality for $\alpha>1/\sqrt{2}$.\\

When we now compute the conditional entropy for the Werner state, we note that the eigenvalues of $\rho_{\mathrm{W}}$ are $\tfrac{1-\alpha}{4}$ (thrice degenerate) and $\tfrac{1+3\alpha}{4}$. With this we find that the boundary between negative and non-negative conditional entropy is given by the state $\rho_{\mathrm{W}}(\alpha_{0})$, where $\alpha_{0}\approx0.7476>1/\sqrt{2}$ is the solution of the transcendental equation
\begin{align}
    3(1-\alpha)\log(1-\alpha)\,+\,(1+3\alpha)\log(1+3\alpha)    &=\,4\log(2)\,.
    \nonumber
\end{align}
The condition $S(A|B)<0$ is hence a strictly stronger condition than nonlocality for the family of Werner states, as illustrated in Fig.~\ref{fig:Werner states}. Indeed, a numerical analysis shows that this is the case for all Weyl states, i.e., the curved surfaces beyond which the conditional entropy becomes negative lie strictly outside of the local region within the orange parachutes in the tetrahedron of locally maximally mixed states, see Fig.~\ref{fig:tetrahedron of physical states}.
\begin{figure}[hb!]
	\centering
    %%%trim={<left> <lower> <right> <upper>}
	\includegraphics[width=0.49\textwidth,trim={0cm 0.1cm 0cm 0.2cm},clip]{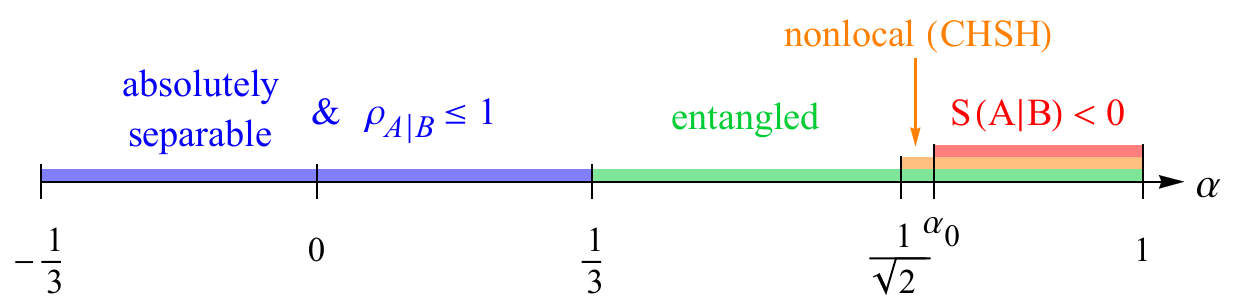}
%Tetraeder04_corr_2.pdf}
	\caption{\textbf{Werner states}. The parameter regions for the Werner state $\rho_{\mathrm{W}}(\alpha)$ from Eq.~(\ref{eq:Werner states}) are shown.}
	\label{fig:Werner states}
\end{figure}

%\clearpage
%\newpage

Examining, on the other hand, the conditional amplitude operator for the Werner states, one finds $\rho\subA{A|B}=2\rho_{\mathrm{W}}$, since $\rho\subA{B}=\tr\subA{A}(\rho_{\mathrm{W}})=\tfrac{1}{2}\mathds{1}$ such that $\log\mathds{1}\subA{A}\otimes\rho\subA{B}$ commutes with $\log(\rho_{\mathrm{W}})$. Therefore, the condition $\rho\subA{A|B}\leq\mathds{1}$ is met as long as $\alpha\leq1/3$, i.e., as long as $\rho_{\mathrm{W}}$ is separable, whereas $\rho\subA{A|B}$ has an eigenvalue larger than $1$ for $\alpha>1/3$. For the Werner states the Cerf-Adami condition $\rho\subA{A|B}\nleq\mathds{1}$ is thus equivalent to the PPT criterion~\cite{Peres1996, HorodeckiMPR1996}, a fact already noticed by Cerf and Adami~\cite{Cerf-Adami-PRA1999}.\\

Indeed, the observations we have made for the Werner states also hold for all other Weyl states in addition to this one-parameter subfamily. That is, a numerical evaluation of the conditional entropy of the locally maximally mixed states of Eq.~(\ref{eq:density operator bloch decomp Weyl states}) presented in Fig.~\ref{fig:tetrahedron of physical states} shows that the negativity of $S(A|B)$ is a strictly stronger condition than nonlocality for these states. That is, the red, curved surfaces indicating where the conditional entropy changes sign lie outside the orange parachute surfaces marking the boundary of nonlocality for all Weyl states. Moreover, we can also formulate the following conditional amplitude operator (CAO) criterion.\\

\begin{theorem}[CAO Criterion]\ \\[0.8mm]
For every locally maximally mixed state $\rho\in\mathcal{W}$, the criterion $\rho\subA{A|B}\leq\mathds{1}$ for the Cerf-Adami conditional amplitude operator $\rho\subA{A|B}$ given by Eq.~\emph{(\ref{eq:conditional amplitude operator})} is equivalent to the PPT criterion, i.e.,
\vspace*{-1.5mm}
\begin{align}
    \rho\subA{A|B}    &\leq\,\mathds{1}\qquad\mbox{if and only if $\rho\in\mathcal{W}$ is separable},\nonumber\\
    \rho\subA{A|B}    &\nleq\,\mathds{1}\qquad\mbox{if and only if $\rho\in\mathcal{W}$ is entangled}.\nonumber
\end{align}
\vspace*{-5mm}
\label{theorem:CA-operator theorem}
\end{theorem}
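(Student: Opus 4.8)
The plan is to show that on the set $\mathcal{W}$ the Cerf--Adami operator is just a rescaling of the state itself, so that the CAO bound $\rho\subA{A|B}\leq\mathds{1}$ collapses to a bound on the largest eigenvalue of $\rho$, and then to recognise that this eigenvalue bound is exactly the PPT (double-pyramid) condition. \textbf{Step 1 ($\rho\subA{A|B}=2\rho$ on $\mathcal{W}$).} First I would use the defining property of Weyl states: the local Bloch vectors vanish, so $\rho\subA{B}=\tfrac{1}{2}\mathds{1}\subA{B}$ and hence $\mathds{1}\subA{\!A}\otimes\rho\subA{B}=\tfrac{1}{2}\mathds{1}\subA{AB}$. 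Being a multiple of the identity, this operator has full support and commutes with $\rho\subA{AB}$, so in Eq.~(\ref{eq:conditional amplitude operator}) the two logarithms can be combined by ordinary functional calculus: $\log(\mathds{1}\subA{\!A}\otimes\rho\subA{B})=-\mathds{1}\subA{AB}$ (logarithm to base $2$), and therefore $\rho\subA{A|B}=\exp\bigl(\log\rho\subA{AB}+\mathds{1}\subA{AB}\bigr)=2\,\rho\subA{AB}$, the exponential again being to base $2$. This is exactly the computation already carried out for the Werner line; the only extra input needed to cover all of $\mathcal{W}$ is that the marginals of \emph{every} Weyl state are maximally mixed.

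\textbf{Step 2 (reduction to an eigenvalue bound).} From Step 1, $\rho\subA{A|B}\leq\mathds{1}$ if and only if $\lambda_{\max}(\rho)\leq\tfrac{1}{2}$, with $\lambda_{\max}$ the largest eigenvalue. To read off $\lambda_{\max}$ I would diagonalise $\rho$ in the Bell basis: the three operators $\sigma_{n}\otimes\sigma_{n}$ appearing in Eq.~(\ref{eq:density operator bloch decomp Weyl states}) mutually commute and are simultaneously diagonalised by $\{\ket{\Phi^{\pm}},\ket{\Psi^{\pm}}\}$, so $\rho=\sum_{i}p_{i}\ket{\beta_{i}}\!\bra{\beta_{i}}$ is Bell-diagonal with eigenvalues $p_{i}=\tfrac{1}{4}\bigl(1+\vec{c}_{i}\cdot\tilde{t}\,\bigr)$, where $\vec{c}_{i}\in\{\pm1\}^{3}$ is the correlation vector of $\ket{\beta_{i}}$ listed at the corners of the tetrahedron in Fig.~\ref{fig:tetrahedron of physical states} and $\tilde{t}=(\tilde{t}_{1},\tilde{t}_{2},\tilde{t}_{3})$. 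Hence the CAO criterion becomes: $\rho\subA{A|B}\leq\mathds{1}$ if and only if $p_{i}\leq\tfrac{1}{2}$ for all $i$.

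\textbf{Step 3 (the eigenvalue bound is the PPT condition).} Partial transposition acts on the Bloch--Fano form of a Weyl state only through $\sigma_{y}^{T}=-\sigma_{y}$, so $\rho^{T_{B}}$ is again a Weyl operator, with correlation vector $(\tilde{t}_{1},-\tilde{t}_{2},\tilde{t}_{3})$ and eigenvalues $q_{i}=\tfrac{1}{4}\bigl(1+\vec{c}_{i}\cdot(\tilde{t}_{1},-\tilde{t}_{2},\tilde{t}_{3})\bigr)$. A one-line finite check shows that flipping the $y$-component of the four corner vectors $\vec{c}_{i}$ produces precisely the set $\{-\vec{c}_{j}\}_{j}$ (the pairing being $\Phi^{+}\!\leftrightarrow\!\Psi^{-}$, $\Phi^{-}\!\leftrightarrow\!\Psi^{+}$), whence $q_{i}=\tfrac{1}{2}-p_{j(i)}$ for a bijection $j$; thus $\rho^{T_{B}}\geq0$ if and only if $p_{i}\leq\tfrac{1}{2}$ for all $i$ — geometrically, exactly the separable double pyramid $|\tilde{t}_{1}|+|\tilde{t}_{2}|+|\tilde{t}_{3}|\leq1$ identified in Sect.~\ref{sec:entanglement-separability}. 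Since the Peres--Horodecki criterion~\cite{Peres1996,HorodeckiMPR1996} is necessary and sufficient for separability in $2\times2$, we conclude $\rho$ separable $\Leftrightarrow$ $\rho^{T_{B}}\geq0$ $\Leftrightarrow$ $p_{i}\leq\tfrac{1}{2}\ \forall i$ $\Leftrightarrow$ $\lambda_{\max}(\rho)\leq\tfrac{1}{2}$ $\Leftrightarrow$ $\rho\subA{A|B}\leq\mathds{1}$, which is the first line of the theorem; taking negations gives the second.

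\textbf{Main obstacle.} There is no genuine difficulty here: the whole argument rests on the marginals of Weyl states being maximally mixed, which trivialises the operator logarithm in the definition of $\rho\subA{A|B}$. The only place demanding any care is Step 3, i.e.\ verifying that imposing $\rho^{T_{B}}\geq0$ on top of physicality cuts the tetrahedron down to exactly $\{p_{i}\leq1/2\}$; but this is the same finite computation already implicit in the identification of the separable double pyramid, so one may alternatively just cite that description together with the Peres--Horodecki criterion.
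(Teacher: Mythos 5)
Your proof is correct, and its first half coincides with the paper's: the decisive observation in both arguments is that every Weyl state has maximally mixed marginals, so that $\mathds{1}\subA{\!A}\otimes\rho\subA{B}=\tfrac{1}{2}\mathds{1}$ commutes with $\rho$ and the conditional amplitude operator collapses to $\rho\subA{A|B}=2\rho$, reducing the CAO bound to the eigenvalue condition $\lambda_{\max}(\rho)\leq\tfrac{1}{2}$. Where you diverge is in identifying this eigenvalue condition with separability. The paper does it via Wootters' concurrence: since the spin-flipped state satisfies $\tilde{\rho}=\rho$ for all Weyl states, $C[\rho]=\max\{0,2p_{1}-1\}$, so entanglement is equivalent to $p_{1}>\tfrac{1}{2}$ by the concurrence theorem, and the link to PPT is then supplied by citing that nonzero concurrence and non-positive partial transposition coincide for two qubits. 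You instead diagonalize $\rho$ and $\rho^{T_{B}}$ explicitly in the Bell basis, observe that partial transposition flips the sign of $\tilde{t}_{2}$ and permutes the corner vectors up to an overall sign, and obtain the exact spectral relation $q_{i}=\tfrac{1}{2}-p_{j(i)}$ between the partial transpose and the state itself; positivity of $\rho^{T_{B}}$ is then manifestly the condition $p_{i}\leq\tfrac{1}{2}$ for all $i$. (Your finite check of the pairing $\Phi^{+}\!\leftrightarrow\!\Psi^{-}$, $\Phi^{-}\!\leftrightarrow\!\Psi^{+}$ is correct.) Your route is more self-contained with respect to the PPT criterion, which is what the theorem actually asserts equivalence to \textemdash\ you never need the concurrence machinery, only the Peres--Horodecki statement that PPT characterizes separability in $2\times2$. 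The paper's route is shorter on computation and has the side benefit of exhibiting the concurrence of all Weyl states along the way, which it reuses elsewhere. Both are valid; the spectral relation $q_{i}=\tfrac{1}{2}-p_{j(i)}$ you derive is a nice explicit explanation of \emph{why} the largest-eigenvalue threshold $\tfrac{1}{2}$ is precisely the PPT boundary.
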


\begin{proof}
For the proof of Theorem~\ref{theorem:CA-operator theorem} we recall Wootters' concurrence~\cite{Hill-Wootters1997, Wootters1998, Wootters2001} from Eq.~(\ref{eq:concurrence}). For calculating $C$ we need the "spin-flipped" state $\tilde{\rho}=\sigma_{y}\hspace*{-1pt}\otimes\hspace*{-1pt}\sigma_{y}\rho^{*}\sigma_{y}\hspace*{-1pt}\otimes\hspace*{-1pt}\sigma_{y}$ which is equal to the density operator $\rho$ for all Weyl states, $\tilde{\rho}=\rho$. The square roots of the eigenvalues of $\rho\tilde{\rho}$ needed for the concurrence are hence just the eigenvalues $p_{n}$ $(n=1,2,3,4)$ of $\rho$, which satisfy $\sum_{n}p_{n}=1$. Consequently, the concurrence of all Weyl states can be written as
\begin{align}
    C[\rho]     &=\max\{0, p_{1}-p_{2}-p_{3}-p_{4}\}=\max\{0, 2\nr p_{1}-1\},
    \label{eq:concurrence formula Weyl states}
\end{align}
where the largest eigenvalue $p_{1}$ must exceed the value of $1/2$ for $\rho$ to be entangled.

Next, recall that for all Weyl states we have $\rho\subA{A}=\rho\subA{B}=\tfrac{1}{2}\mathds{1}$ and the Cerf-Adami conditional amplitude operator is hence given by
\begin{align}
    \rho\subA{A|B}  &=\,\exp\bigl(\log\rho\,-\,\log(\mathds{1}\subA{\!A}\!\otimes\rho\subA{B})\bigr)\,=\,2\rho\,.
\end{align}
Consequently, we have $\rho\subA{A|B}\nleq\mathds{1}$ when the largest eigenvalue of $\rho\subA{A|B}=2\rho$ exceeds $1$, i.e., when the largest eigenvalue of $\rho$ exceeds $1/2$. By virtue of Eq.~(\ref{eq:concurrence formula Weyl states}) this means that the state is entangled. Conversely, all entangled Weyl states must have an eigenvalue larger than $1/2$ such that $\rho\subA{A|B}\nleq\mathds{1}$. The fact that all entangled two-qubit states have nonzero concurrence and non-positive partial transposition concludes the proof.
\end{proof}

\subsection{Inequivalence of the CAO and PPT Criteria}\label{sec:counterexample}

Having established the significance of the conditional amplitude operator and the relationship of entanglement, negative conditional entropy, and nonlocality for the Weyl states, we are curious whether the observations we have made also hold for other states. We therefore consider the unitary orbit of one of the Weyl states that takes us outside this set. Starting from the Narnhofer state $\rho\subtiny{0}{0}{\mathrm{N}}=\tfrac{1}{4}(\mathds{1}\subtiny{0}{0}{2}\otimes\mathds{1}\subtiny{0}{0}{2}+\sigma_{x}\otimes\sigma_{x})$, situated at the corner of the double pyramid of separable states half-way on the line connecting $\ket{\Psi^{+}}$ and $\ket{\Phi^{+}}$ in the tetrahedron of Fig.~\ref{fig:tetrahedron of physical states}, we apply the unitary transformation
\begin{align}
    V   &=\,\tfrac{1}{4}\bigl[(2+\sqrt{2})\,\mathds{1}\subtiny{0}{0}{2}\otimes\mathds{1}\subtiny{0}{0}{2}
        +i\sqrt{2}\,(\sigma_{x}\otimes\sigma_{y}+\sigma_{y}\otimes\sigma_{x})\nonumber\\[1mm]
        &\ \ -\,(2-\sqrt{2})\,\sigma_{z}\otimes\sigma_{z}\bigr]\,.
\end{align}
The resulting state, given by
\begin{align}
    \rho_{V}    \,=\,V\rho\subtiny{0}{0}{\mathrm{N}}V^{\dagger} &=\,\tfrac{1}{4}\bigl[\mathds{1}\subtiny{0}{0}{2}\otimes\mathds{1}\subtiny{0}{0}{2}
    +\tfrac{1}{2}(\sigma_{z}\otimes\mathds{1}\subtiny{0}{0}{2}+\mathds{1}\subtiny{0}{0}{2}\otimes\sigma_{z})\nonumber\\[1mm]
    &\ +\tfrac{1}{2}(\sigma_{x}\otimes\sigma_{x}+\sigma_{y}\otimes\sigma_{y})\bigr]\,,
\end{align}
lies outside of the set $\mathcal{W}$ due to the occurrence of the term $\tfrac{1}{2}(\sigma_{z}\otimes\mathds{1}\subtiny{0}{0}{2}+\mathds{1}\subtiny{0}{0}{2}\otimes\sigma_{z})$. The purity $\tr(\rho\subtiny{0}{0}{\mathrm{N}}^{2})=\tr(\rho_{V}^{2})=\tfrac{1}{2}$ of the state is left unchanged by the unitary transformation but the final state $\rho_{V}$ is entangled. In fact, the concurrence takes the maximally possible value at this fixed purity, $C[\rho_{V}]=\tfrac{1}{2}$, i.e., the state $\rho_{V}$ belongs to the class of maximally entangled mixed states (MEMS)~\cite{IshizakaHiroshima2000,Munro-James-White-Kwiat2001}. In other words, no global unitary may entangle this state any further.

With this in mind, we now consider a family of states in the two-qubit Hilbert space along the line from $\rho_{V}$ to $\rho_{\mathrm{top}}$ from Eq.~(\ref{eq:rho top}), i.e., we define
\begin{align}
    \rho_{V}(\nu)   &:=\,\nu\,\rho_{V}\,+\,(1-\nu)\,\rho_{\mathrm{top}}\,,
\end{align}
where $0\leq\nu\leq1$. The eigenvalues of the partial transpose of $\rho_{V}(\nu)$ are $\tfrac{\nu}{4}$ (twice degenerate) and $\tfrac{1}{4}\bigl(2-\nu(1\pm\sqrt{2})\bigr)$. The states along the line are hence entangled if $\nu>2(\sqrt{2}-1)$. Now, if we consider the CAO criterion, we first compute the reduced state
\begin{align}
    \rho_{V,B}(\nu) &=\,\tr\subA{A}\bigl(\rho_{V}(\nu)\bigr)\,=\,\tfrac{1}{4}\begin{pmatrix} 2+\nu & 0 \\ 0 & 2-\nu \end{pmatrix}\,,
    \label{eq:reduced states of narnohofer-top line}
\end{align}
and the spectrum of $\rho_{V}(\nu)$, given by
\begin{align}
    \operatorname{spectr}\bigl(\rho_{V}(\nu)\bigr)   &=\,\{\tfrac{1}{2},0,\tfrac{1-\nu}{2},\tfrac{\nu}{2}\}\,.
    \label{eq:spectrum of narnohofer-top line}
\end{align}
To compute the spectrum of $\rho\subA{A|B}$, note that $\rho_{V}(\nu)$ has (at least) one vanishing eigenvalue [see Eq.~(\ref{eq:spectrum of narnohofer-top line})], which is problematic when evaluating $\log\rho_{V}(\nu)$. However, a simple work-around is to replace the vanishing eigenvalue by $\epsilon>0$ throughout the computation and take the limit $\epsilon\rightarrow0$ at the end. With this procedure we obtain the eigenvalues of $\rho\subA{A|B}$ as
\begin{align}
    \{0,\tfrac{2}{2+\nu},\tfrac{2-2\nu}{2-\nu},\tfrac{2\nu}{\sqrt{4-\nu^{2}}}\}\,.
    \label{eq:spectrum of narnohofer-top line CAO}
\end{align}
The first three eigenvalues are always smaller than $1$, but the last eigenvalue becomes larger than one when $\nu>\tfrac{2}{\sqrt{5}}>2(\sqrt{2}-1)$. We thus see that the PPT criterion and the CAO criterion are inequivalent in general. Nonetheless, the conditional entropy of the state $\rho_{V}(\nu)$ remains nonnegative for all values $\nu$, and none of these states allows for a violation of the CHSH inequality either.\\

To incorporate also negative conditional entropy and nonlocality into the picture, we hence turn again to the Gisin states $\rho\subtiny{0}{0}{\mathrm{G}}(\lambda,\theta)$ from Eq.~(\ref{eq:Gisin state}). The spectrum of the density operator $\rho\subtiny{0}{0}{\mathrm{G}}(\lambda,\theta)$ is given by
\begin{align}
    \operatorname{spectr}(\rho\subtiny{0}{0}{\mathrm{G}})   &=\,\{0,\tfrac{1-\lambda}{2},\tfrac{1-\lambda}{2},\lambda\}\,,
    \label{eq:spectrum of gisin state}
\end{align}
while the reduced states $\rho\subA{A}$ and $\rho\subA{B}$ are already diagonal and have eigenvalues $\tfrac{1}{2}\bigl(1\pm\lambda\cos(2\theta)\bigr)$. The graphical analysis of the parameter region for $\lambda$ and $\theta$ for which the conditional entropy is negative reveals an interesting feature. As can be seen in Fig.~\ref{fig:cond entr of Gisin state}~(a), while some Gisin states are both nonlocal and have negative conditional entropy, some only have one of these properties, but not the other. That is, contrary to what was found for the Weyl states, in general not all states for which $S(A|B)<0$ are also nonlocal. And, as before, not all nonlocal states have negative conditional entropy.

Following up on this surprise, let us quickly examine the condition $\rho\subA{A|B}\leq\mathds{1}$ for the Gisin states. As noted in Ref.~\cite{Cerf-Adami-PRA1999}, there exist entangled states for which $\rho\subA{A|B}\leq\mathds{1}$ indeed holds, but due to Theorem~\ref{theorem:CA-operator theorem}, these must lie outside the set $\mathcal{W}$. The Gisin states are hence perfect examples for such states.

When computing the spectrum of $\rho\subA{A|B}$ for $\rho\subtiny{0}{0}{\mathrm{G}}$ we again encounter (at least) one vanishing eigenvalue [see Eq.~(\ref{eq:spectrum of gisin state})]. As before, we therefore replace the vanishing eigenvalue by $\epsilon>0$ in the computation and consider the limit $\epsilon\rightarrow0$ at the end. With this method, the eigenvalues $\kappa_{i}$ of $\rho\subA{A|B}$ are found to be
\begin{align}
    \kappa_{1} &=\,0,\ \ \kappa_{2}\,=\,\frac{1-\lambda}{1-\lambda\nr\cos(2\theta)},\ \ \kappa_{3}\,=\,\frac{1-\lambda}{1+\lambda\nr\cos(2\theta)}\,,
    \nonumber\\[1mm]
    \mbox{and}  &\ \ \ \kappa_{4}\,=\,\frac{2\lambda\,\exp\bigl(-\cos(2\theta)\nr\artanh[\lambda\nr\cos(2\theta)]\bigr)}{\sqrt{1-\lambda^{2}\cos^{2}\!(2\theta)}}\,.
\end{align}
While $\kappa_{2}$ and $\kappa_{3}$ are smaller than $1$ for all values of $\lambda$ and $\theta$, the fourth eigenvalue $\kappa_{4}$ can become larger than $1$. The corresponding region, delimited by the purple lines in Fig.~\ref{fig:cond entr of Gisin state}~(a), is contained within the region of entangled states, but there is a region of entanglement where $\kappa_{4}<1$ and hence $\rho\subA{A|B}\leq\mathds{1}$. This clearly demonstrates that the condition $\rho\subA{A|B}\leq\mathds{1}$ for the Cerf-Adami operator in general provides a necessary but not sufficient condition for separability.

\begin{figure*}[ht!]
	\centering
    %%%trim={<left> <lower> <right> <upper>}
	(a)\includegraphics[width=0.44\textwidth,trim={0cm 0.05cm 0cm 0cm},clip]{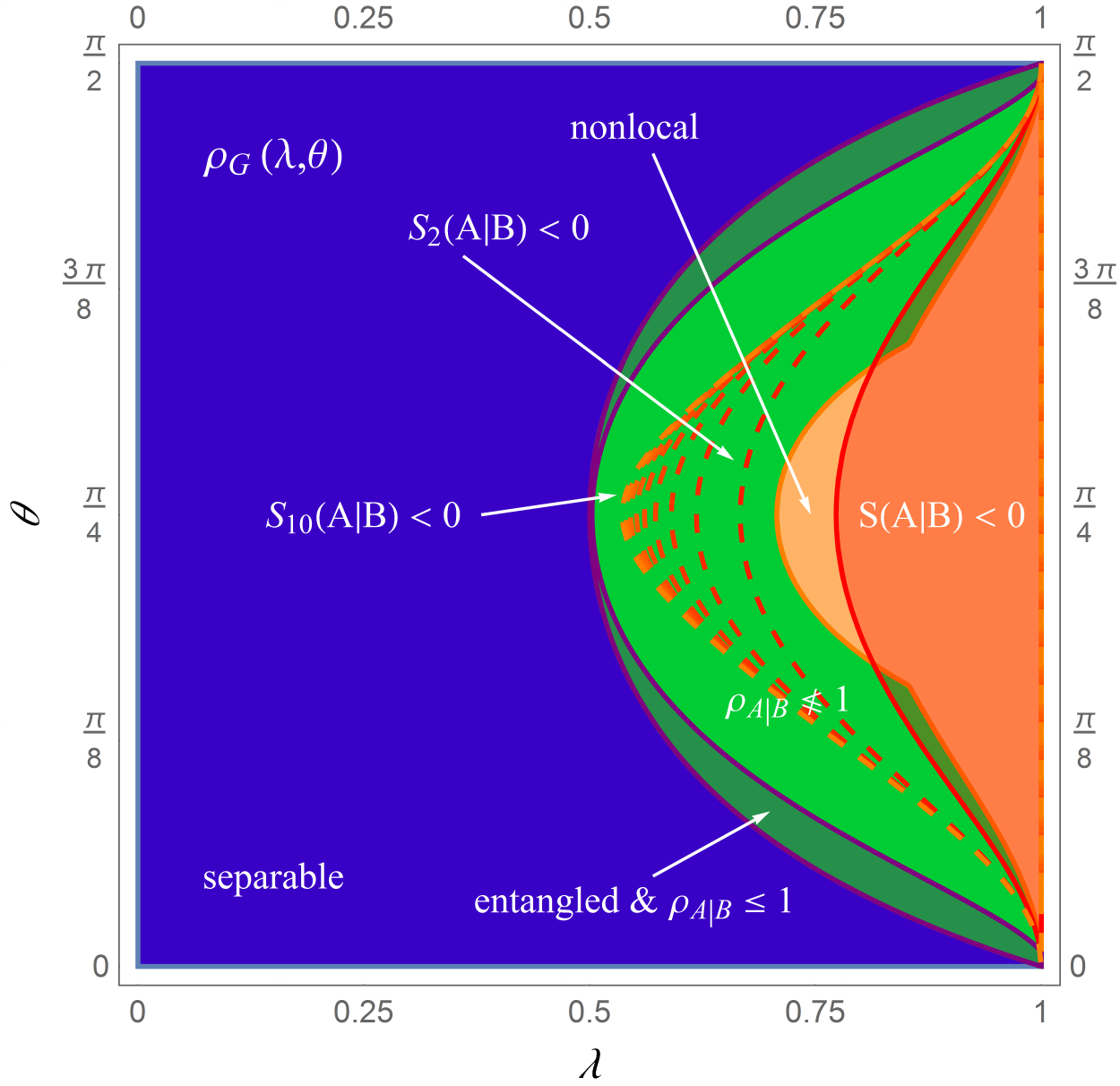}
    (b)\includegraphics[width=0.44\textwidth,trim={0cm 0.05cm 0cm 0cm},clip]{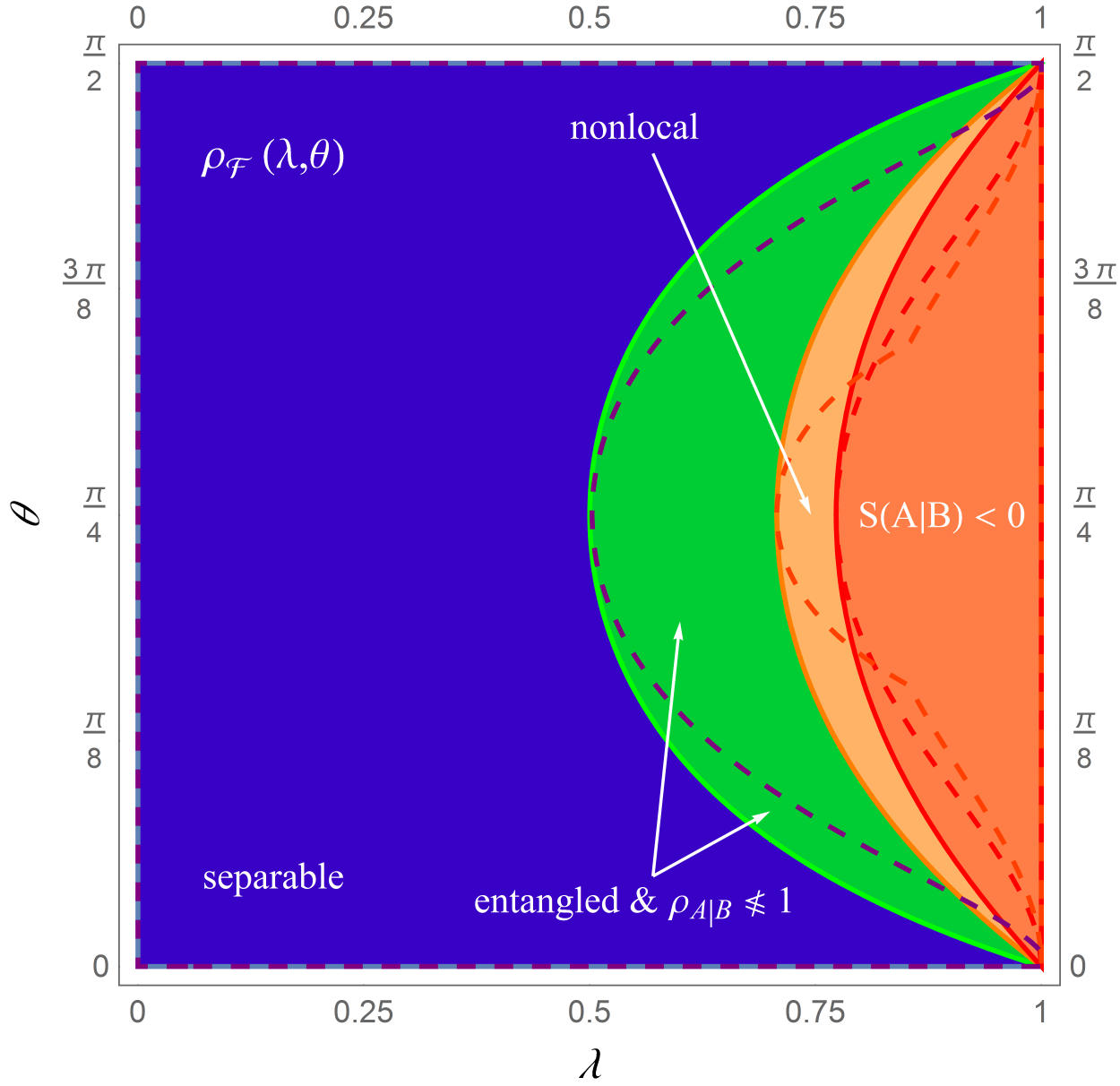}
    \vspace*{-1mm}
	\caption{\textbf{Conditional Entropy of Gisin states}.
    The parameter regions of separability (blue), entanglement (green), nonlocality (orange) and negative conditional entropy (red) are shown for the family of unfiltered Gisin states $\rho\subtiny{0}{0}{\mathrm{G}}(\lambda,\theta)$ from Eq.~(\ref{eq:Gisin state}) in (a), and for the filtered Gisin states $\rho\subtiny{0}{0}{\mathcal{F}}(\lambda,\theta)$ from Eq.~(\ref{eq:filtered Gisin state}) in (b). For the unfiltered states in (a), the region of entangled states whose conditional amplitude operator is bounded by unity, $\rho\subA{A|B}\leq\mathds{1}$, is delimited in purple, showing that the condition $\rho\subA{A|B}\nleq\mathds{1}$ is strictly weaker than the PPT criterion for two qubits. In addition, it can be clearly seen in (a) that there is no clear hierarchy between the conditions of nonlocality and negative conditional (von Neumann) entropy. That is, there exist local states with negative conditional entropy, as well as nonlocal states with positive conditional entropy, $S(A|B)\geq0$. However, the hierarchy is recovered when considering conditional R{\'e}nyi entropies $S_{\alpha}(A|B)$ from Eq.~(\ref{eq:conditional Renyi entropy}) for $\alpha\geq2$. The corresponding boundaries for $\alpha=2,3,\ldots,10$ are shown as dashed lines. In (b), the boundaries for $S(A|B)<0$, nonlocality, and $\rho\subA{A|B}\nleq\mathds{1}$ for the unfiltered states are indicated by the respective dashed lines.}
	\label{fig:cond entr of Gisin state}
\end{figure*}

For the sake of completeness and illustration, let us also re-examine the filtered Gisin states $\rho\subtiny{0}{0}{\mathcal{F}}(\lambda,\theta)$ from Eq.~(\ref{eq:filtered Gisin state}). Since these are Weyl states, Theorem~\ref{theorem:CA-operator theorem} applies and the boundary between $\rho\subA{A|B}\leq\mathds{1}$ and $\rho\subA{A|B}\nleq\mathds{1}$ coincides with the boundary between separability and entanglement. To determine the conditional entropy of $\rho\subtiny{0}{0}{\mathcal{F}}$ we note that the nonzero eigenvalues of the filtered Gisin states are
\begin{align}
    \frac{\lambda\nr\sin(2\theta)}{1-\lambda+\lambda\nr\sin(2\theta)}\ \ \ \mbox{and}\ \ \
    \frac{1-\lambda}{2\bigl(1-\lambda+\lambda\nr\sin(2\theta)\bigr)},
\end{align}
where the latter eigenvalue is twice degenerate. With these eigenvalues, we can evaluate the conditional entropy and find that the region where it is negative is contained within the region of nonlocality, see Fig.~\ref{fig:cond entr of Gisin state}~(b).

\subsection{Negativity of Generalized Conditional Entropies}\label{sec:Negativity of Generalized Conditional Entropies}

For the conditional entropy based on the von Neumann entropy $S(\rho)$, no clear hierarchy with nonlocality can hence be established in general. Some states may be nonlocal and satisfy $S(A|B)\geq0$, while other states may have negative values of $S(A|B)$, whilst being local (in the sense of the CHSH inequality). An interesting way out of this confusion is employing generalized entropy measures. One candidate for such an extension is the R{\'e}nyi $\alpha$-entropy, defined as
\begin{align}
    S_{\alpha}(\rho)    &:=\,\frac{1}{1-\alpha}\log\tr(\rho^{\alpha})\,=\,\frac{1}{1-\alpha}\log\sum\limits_{n}p_{n}^{\alpha}\,,
    \label{eq:Renyi entropy}
\end{align}
where $p_{n}$ are the eigenvalues of $\rho$ and $\alpha\geq1$. When $\alpha$ tends to $1$, the von Neumann entropy $S(\rho)$ from Eq.~(\ref{eq:von Neumann entropy}) arises from the R{\'e}nyi entropy as a limiting case, $\lim_{\alpha\rightarrow1}S_{\alpha}(\rho)=S_{1}(\rho)=S(\rho)$. For $\alpha\geq2$, this family of entropies provide stronger entanglement criteria than the von Neumann entropy, i.e., when we define the conditional R{\'e}nyi entropy $S_{\alpha}(A|B)$ as
\begin{align}
    S_{\alpha}(A|B)    &:=\,S_{\alpha}(A,B)\,-\,S_{\alpha}(B)\,,
    \label{eq:conditional Renyi entropy}
\end{align}
where $S_{\alpha}(A,B)\equiv S_{\alpha}(\rho\subA{AB})$ and $S_{\alpha}(B)\equiv S_{\alpha}(\rho\subA{B})$. These generalized conditional entropies can be shown~\cite{HorodeckiRPM1996} to be nonnegative for all separable states, such that $S_{\alpha}(A|B)<0$ implies that the quantum state is entangled. Moreover, it was shown in Ref.~\cite{HorodeckiRPM1996} that the negativity of the conditional R{\'e}nyi $2$-entropy, $S_{2}(A|B)<0$, already is a necessary condition for nonlocality in the CHSH sense. In other words, the positivity of $S_{2}(A|B)$ means that the CHSH inequality cannot be violated, i.e.,
\vspace*{-1.5mm}
\begin{align}
    S_{2}(A|B)\,\geq\,0
    \ \ \ \Rightarrow\ \ \
    \overline{\B}\subtiny{0}{0}{\mathrm{CHSH}}\suptiny{0}{0}{\,\mathrm{max}}\,\leq\,2\sqrt{1-\bigl|\,|\vec{a}|^{2}-|\vec{b}|^{2}\bigr|}\,,
    \nonumber
\end{align}
where $\vec{a}$ and $\vec{b}$ are the Bloch vectors [see Eq.~(\ref{eq:density operator bloch decomp})] of the two qubits, respectively. The conditional R{\'e}nyi $2$-entropy hence provides a strictly stronger condition than nonlocality for two qubits, but not in higher dimensions~\cite{HorodeckiRPM1996}. This is illustrated for the Weyl states in one sector of the tetrahedron in Fig.~\ref{fig:Alpha entropies tetrahedron}. Moreover, it was proven in Ref.~\cite{Horodecki-R-M1996} that all Weyl states are separable if and only if all conditional R{\'e}nyi $\alpha$-entropies are positive semi-definite, i.e.,
\begin{align}
    \rho\in\mathcal{W}\ \ \mbox{separable}\ \ \Leftrightarrow\ \ S_{\alpha}(A|B)\,\geq\,0\ \, \forall\, \alpha\,.
\end{align}
The positivity of the entire family of conditional R{\'e}nyi $\alpha$-entropies hence provides an entanglement criterion equivalent to the PPT criterion for locally maximally mixed states. For other two-qubit states things are again less clear. For instance, for the unfiltered Gisin states $\rho\subtiny{0}{0}{\mathrm{G}}(\lambda,\theta)$ from Eq.~(\ref{eq:Gisin state}), the conditions $S_{\alpha}(A|B)\geq0$ are shown in Fig.~\ref{fig:cond entr of Gisin state}~(a) for $\alpha=2,3,\ldots,10$, which are clearly stronger than nonlocality, but weaker than PPT or $\rho\subA{A|B}\leq\mathds{1}$ for detecting entanglement.

Nonetheless, conditional entropies and the conditional amplitude operator provide straightforward entanglement witnesses that can be in principle employed in systems of arbitrary dimension. For example, for some specific two-mode Gaussian states (where the Hilbert space is infinite-dimensional), the Tsallis $q$-conditional entropy gives comparable results~\cite{SudhaUshaDeviRajagopal2010} to the two-qubit case. In general, the exact relationship between entanglement, nonlocality, and conditional entropies is nonetheless complicated.

\begin{figure}
	\centering
    %%%trim={<left> <lower> <right> <upper>}
	\includegraphics[width=0.48\textwidth,trim={0cm 0.5cm 0cm 0cm},clip]{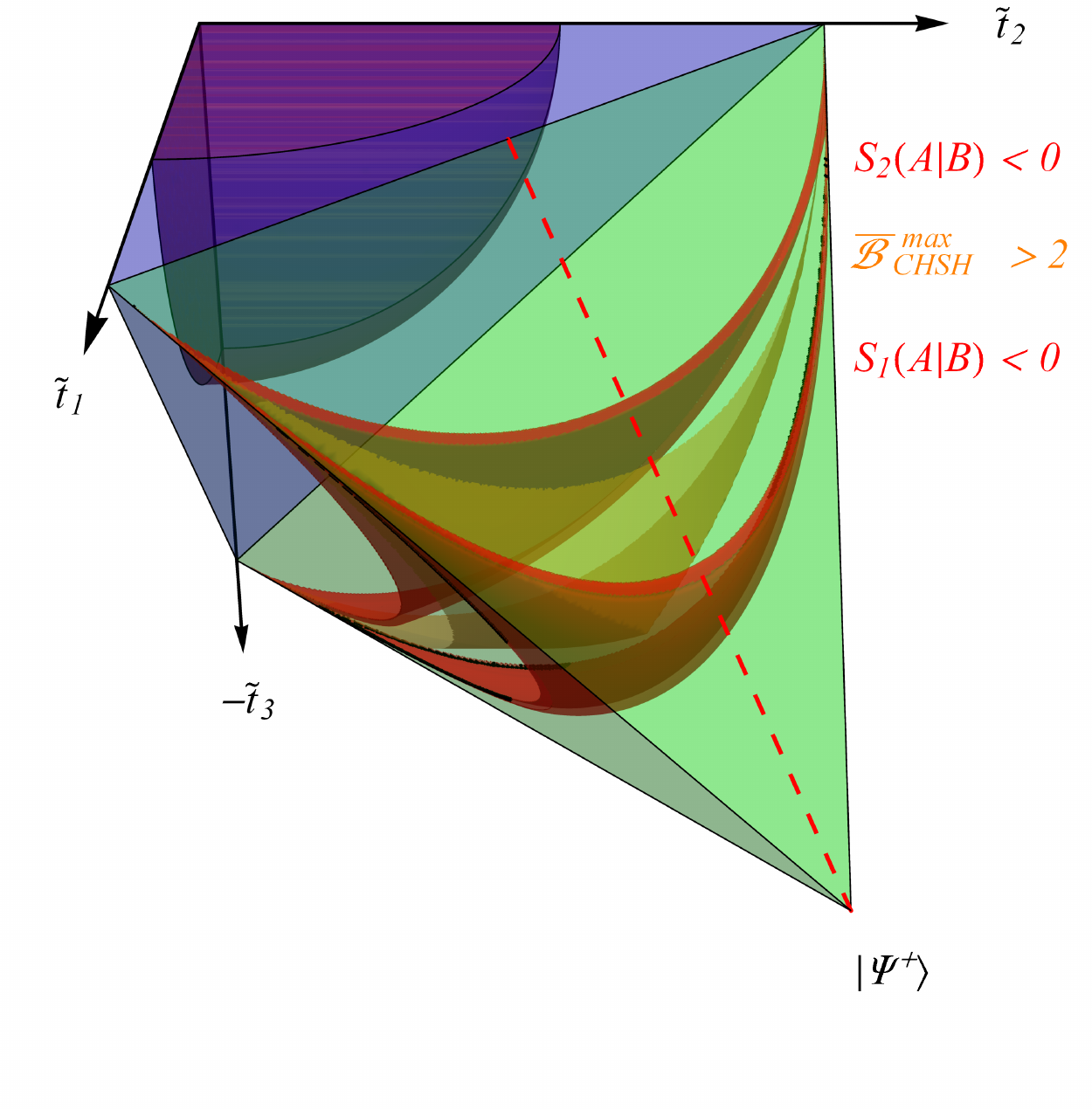}
	\caption{\textbf{Conditional R{\'e}nyi Entropies \& Nonlocality}. The sector of the tetrahedron of Weyl states defined by $\tilde{t}_{1},\tilde{t}_{2}\geq0$ and $\tilde{t}_{3}\leq0$ is shown. In addition to the boundaries shown in Fig.~\ref{fig:tetrahedron of physical states}, the boundary between states with positive and negative conditional R{\'e}nyi $2$-entropy from Eq.~(\ref{eq:conditional Renyi entropy}) is illustrated. As can be seen, all local states lie within the set of states for which $S_{1}(A|B)\geq0$, whereas all nonlocal states have negative conditional R{\'e}nyi $2$-entropy. The dashed red line indicates the filtered Gisin states in this sector of the tetrahedron, corresponding to (parts of) the dashed red lines in Fig.~\ref{fig:tetrahedron of physical states} and Fig.~\ref{fig:Gisin state}.}
	\label{fig:Alpha entropies tetrahedron}
\end{figure}

\section{Conclusion}\label{sec:conclusion}

We have reviewed the geometry of entanglement for two-qubit systems. Despite its simplicity, this bipartite system already reveals many of the intricacies in the relationship of the numerous criteria for entanglement and separability, and is hence an important guiding example. In particular, we have focussed on highlighting the roles of negative conditional entropy and the conditional amplitude operator criterion as entanglement detection methods. Since many technical complications already arise for the simple two-qubit case, we have placed specific emphasis on the family of locally maximally mixed Weyl states. For the latter, a clear hierarchy emerges, in which the set of CHSH-nonlocal states fully contains the set of states with negative conditional (von Neumann) entropy, while it is itself fully contained within the set of states with negative conditional R{\' e}nyi $2$-entropy. At the same time, we have shown that the conditional amplitude operator criterion is equivalent to the PPT criterion for all Weyl states, but not in general, as we have demonstrated for several examples.

Our article hence provides both an introduction to the topic of entanglement geometry and a step towards the exploration of conditional amplitude operators as general entanglement detection tools. Specifically, it may be of interest for future research to investigate possible generalizations of conditional entropy operators, and to clarify whether the violation of the criterion $\rho\subA{A|B}\leq\mathds{1}$ implies a nonpositive partial transpose in general.

\begin{acknowledgments}
We would like to thank Philipp K{\"o}hler and Heide Narnhofer for fruitful discussions and comments.
\end{acknowledgments}

%\newpage
%\appendix*
%\section{}\label{sec:}

\end{document}